\definecolor{g3}{gray}{0.0}
\definecolor{g2}{gray}{0.2}
\definecolor{g1}{gray}{0.5}
\newif\iflong\longfalse
\newcommand{\afficheshort}[1]{}
\newcommand{\onlyshort}[1]{
  \renewcommand{\afficheshort}{#1}
  \iflong
  \renewcommand{\afficheshort}{}
  \fi
  \afficheshort}
\newcommand{\encod}[1]{\ensuremath{[\![#1]\!]}}
\newcommand{\set}[1]{\ensuremath{\{#1\}}}
\newcommand{\OR}{\ensuremath{~\big|~}}
\newcommand{\defgram}{\ensuremath{~~::=~~}}
\newcommand{\eqdef}{\ensuremath{\stackrel{def}{=}}}
\renewcommand{\max}[1]{\ensuremath{\mathrm{max}(#1)}}
\renewcommand{\min}[1]{\ensuremath{\mathrm{min}(#1)}}
\newcommand{\rar}{\ensuremath{\longrightarrow}}
\newcommand{\new}{\ensuremath{\bm{\nu}}}
\newcommand{\nil}{\ensuremath{\bm{0}}}
\newcommand{\msg}[1]{\ensuremath{\langle#1\rangle}}
\newcommand{\out}[1]{\ensuremath{\overline#1}}
\newcommand{\outm}[2]{\ensuremath{\overline#1\msg{#2}}}
\newcommand{\fnames}[1]{\ensuremath{\mathrm{fn}(#1)}}
\newcommand{\bnames}[1]{\ensuremath{\mathrm{bn}(#1)}}
\newcommand{\names}[1]{\ensuremath{\mathrm{names}(#1)}}
\newcommand{\rcvnames}[1]{\ensuremath{\mathrm{rcv}(#1)}}
\newcommand{\resnames}[1]{\ensuremath{\mathrm{res}(#1)}}
\newcommand{\father}[1]{\ensuremath{\mathrm{father}(#1)}}
\newcommand{\son}[1]{\ensuremath{\mathrm{son}(#1)}}
\newcommand{\xr}[1]{\ensuremath{\xrightarrow{#1}}}
\newcommand{\red}{\ensuremath{\longrightarrow}}
\newcommand{\chan}[2]{\ensuremath{\sharp^{#1}#2}}
\newcommand{\ichan}[2]{\ensuremath{\mathsf{i}^{#1}#2}}
\newcommand{\otypename}{\ensuremath{\mathsf{o}}}
\newcommand{\ochan}[2]{\ensuremath{\otypename^{#1}#2}}
\newcommand{\topsign}[2]{\ensuremath{\uparrow}}
\newcommand{\tchan}[2]{\ensuremath{\topsign\!\!#2}}
\newcommand{\os}[1]{\ensuremath{\mathcal{M}(#1)}}
\newcommand{\osG}[2]{\ensuremath{\mathcal{M}_{#1}(#2)}}
\newcommand{\rname}[1]{}
\renewcommand{\rname}[1]{\textsc{#1}}
\newcommand{\Unit}{\ensuremath{\mathbb{U}}}
\newcommand{\unit}{\ensuremath{\star}}
\newcommand{\lvl}[1]{\ensuremath{\mathsf{lvl}(#1)}}
\newcommand{\Gam}{\ensuremath{\Gamma}}
\newcommand{\typ}[1]{\ensuremath{\vdash#1{\tiny \textcolor{red}{\ensuremath{^{OLD}}}}}}
\newcommand{\typj}[2]{\ensuremath{#1\,\vdash\,#2}}
\newcommand{\typlpi}[2]{\ensuremath{#1\,\vdash^{\tiny L\pi}\,#2}}
\newcommand{\entailinf}{\ensuremath{\vdash^{\tiny L\pi}_{\mathtt{TI}}}}
\newcommand{\typinf}[2]{\ensuremath{#1\,\entailinf\,#2}}
\newcommand{\typdeng}[2]{\ensuremath{#1\,\vdash_{D}\,#2}}
\newcommand{\subt}[2]{\ensuremath{#1\leq#2}}
\newcommand{\dom}[1]{\ensuremath{\mathrm{dom}(#1)}}
\newcommand{\deriv}{\ensuremath{\mathcal{D}}}
\newcommand{\Lpi}{L$\pi$}
\newcommand{\STlam}{ST$\lambda$}
\newtheorem{defi}{Definition}
\newtheorem{thm}[defi]{Theorem}
\newtheorem{prop}[defi]{Proposition}
\newtheorem{lem}[defi]{Lemma}
\newtheorem{rk}[defi]{Remark}
\newtheorem{expl}[defi]{Example}
\newcommand{\qed}{\ensuremath{\Box}}
\newenvironment{proof}{\noindent\textbf{Proof}\begin{sl}}{\hfill\qed\end{sl}}
\newenvironment{proofsketch}{\noindent\textbf{Proof (sketch).}\begin{sl}}{\end{sl}}
\newcommand{\daniel}[1]{\textcolor{blue}{#1}}
\title{Termination in a $\pi$-calculus with Subtyping}
\author{
Ioana Cristescu 
\qquad\qquad
Daniel Hirschkoff
\institute{ENS Lyon, Universit\'e de Lyon, CNRS, INRIA, France}
}
\date{}
\begin{document}

\maketitle
\begin{abstract}
We present a type system to guarantee termination of
  $\pi$-calculus processes that exploits input/output capabilities and
  subtyping, as originally introduced by Pierce and Sangiorgi, in
  order to analyse the usage of channels.

  We show that our system improves over previously existing proposals
  by accepting more processes as terminating. This increased
  expressiveness allows us to capture sensible programming idioms.
  We demonstrate how our system can be extended to handle the
  encoding of the simply typed $\lambda$-calculus, and
  discuss questions related to type inference.



\end{abstract}

\section{Introduction}
\label{sec:intro}
Although many concurrent systems, such as servers, are supposed to run
forever, termination is an important property in a concurrent
setting. For instance, one would like a request to a server to be eventually
answered; similarly, the access to a shared resource should be
eventually granted.  Termination can be useful to guarantee in turn
lock-freedom properties~\cite{DBLP:journals/toplas/KobayashiS10}.

In this work, we study termination in the setting of the
$\pi$-calculus: concurrent systems are specified as $\pi$-calculus
processes, and we would like to avoid situations in which a process
can perform an infinite sequence of internal communication
steps. Despite its conciseness, the $\pi$-calculus can express complex
behaviours, such as reconfiguration of communication topology, and
dynamic creation of channels and threads. Guaranteeing termination is
thus a nontrivial task.
%
%

More specifically, we are interested in methods that provide
 termination guarantees statically.
There exist several type-based approaches to guarantee termination in
the
$\pi$-calculus~\cite{deng:sangiorgi:termination:IC,yoshida:berger:honda:termination:ic,sangiorgi:mscs:termination,DBLP:conf/birthday/DemangeonHS09,demangeon:hirschkoff:sangiorgi:concur10}. In
these works, any typable process is guaranteed to be reactive, in the
sense that it cannot enter an infinite sequence of internal
communications: it eventually terminates computation, or ends up in a
state where an interaction with the environment is required.

The type systems in the works mentioned above have different
expressive powers. Analysing the expressiveness of a type system for
termination amounts to studying the class of processes that are
recognised as terminating. A type system for termination typically
rules out some terminating terms, because it is not able to recognise
them as such (by essence, an effective type system for termination
defines an approximation of this undecidable property).
When improving expressiveness, one is interested in making the type
system more
flexible: 
more processes should be deemed as terminating. An important point in
doing so is also to make sure that (at least some of) the `extra
processes' make sense from the point of view of programming.


\paragraph{Type systems for termination in the $\pi$-calculus.}

Existing type systems for termination in the $\pi$-calculus build on
simple types~\cite{SW01}, whereby the type of a channel describes what
kind of values it can carry. Two approaches, 
that we shall call `level-based' and
`semantics-based', have been studied to guarantee termination of
processes. 
We discuss below the first kind of methods, and return to
semantics-based approaches towards the end of this section.
Level-based methods for the termination of processes originate
in~\cite{deng:sangiorgi:termination:IC}, and have been further
analysed and developed
in~\cite{DBLP:conf/birthday/DemangeonHS09}.
They exploit a stratification of names, obtained by associating a \emph{level}
(given by a natural number) to each name. Levels are used to insure
that at every reduction step of a given process, some well-founded
measure defined on  processes decreases.

\smallskip

Let us illustrate the level-based approach on some examples. In this
paper, we work in the asynchronous $\pi$-calculus, and replication can
occur only on input prefixes. As  in previous work, adding
features like synchrony or the sum operator to our setting does not
bring any difficulty. 

According to level-based type systems, the process $!a(x).\outm{b}x$
is well-typed provided \lvl{a}, the level of $a$,
is strictly greater than \lvl{b}. Intuitively, this process
trades messages on $a$ (that `cost' \lvl a) for messages on $b$ (that
cost less).
Similarly, $!a(x).(\outm b x\,|\, \outm b x)$ is also well-typed,
because none of the two messages emitted on $b$ will be liable to
trigger messages on $a$ ad infinitum. More generally, for a process of
the form $!a(x).P$ to be typable, we must check that all messages
occurring in $P$ are transmitted on channels whose level is strictly
smaller than \lvl a (more accurately, we only take into account those
outputs that do not occur under a replication in $P$ --- see
Section~\ref{section:io}).

This approach rules out a process like $!a(x).\outm b x ~|~
!b(y).\outm a y$ (which generates the unsatisfiable constraint $\lvl
a>\lvl b>\lvl a$), as well as the other obviously `dangerous' term
$!a(x).\outm a x$ --- note that neither of these processes is
diverging, but they lead to infinite computations as soon as they are
put in parallel with a message on $a$.


\paragraph{The limitations of simple types.}

The starting point of this work is the observation that since
existing level-based systems rely on simple types, they rule out
processes that are harmless from the point of view of termination,
essentially because in simple types, all names transmitted on a given
channel should have the same type, and hence, in our setting, the same
level as well.

If we try for instance to type the process $P_0 \eqdef !a(x).\outm x
t$, the constraint is $\lvl a>\lvl x$, in other words, the level of
the names transmitted on $a$ must be smaller than $a$'s level. 
%
It should therefore be licit to put $P_0$ in parallel with $\outm a
p\,|\,\outm a q$, provided $\lvl p<\lvl a$ and $\lvl q<\lvl
a$. Existing type systems enforce that $p$ and $q$ have \emph{the same
  type} for this process to be typable: as soon as two names are sent
on the same channel (here, $a$), their types are unified. This means
that if for some reason (for instance, if the subterm $!p(z).\outm q
z$ occurs in parallel) we must have $\lvl p>\lvl q$, 
the resulting process is rejected, although it is terminating.

We would like to provide more flexibility in the handling of the level
of names, by relaxing the constraint that $p$ and $q$ from the example
above should have the same type. To do this while preserving soundness
of the type system, 
it is necessary to take into account the way names are used in the
continuation of a replicated input. In process $P_0$ above, 
 $x$ is used in output in the continuation, which allows one
to send on $a$ any name (of the appropriate simple type) of level
\emph{strictly smaller} than \lvl a. If, on the other hand, we
consider process $P_1 \eqdef !b(y).!y(z).\outm c z$, then typability
of the subterm $!y(z).\outm c z$ imposes $\lvl y>\lvl c$, which means
that any name of level \emph{strictly greater} than \lvl c can be sent
on $b$. In this case, $P_1$ uses the name
$y$ that is received along $b$ in input.
%
We can remark that divergent behaviours would arise if we allowed the
reception of names having a bigger (resp.\ smaller) level in $P_0$
(resp.\ $P_1$).

\paragraph{Contributions of this work.}


These observations lead us to introduce a new type system for
termination of mobile processes based on Pierce and Sangiorgi's
system for \emph{input/output types}
(i/o-types)~\cite{DBLP:journals/mscs/PierceS96}.  
I/o-types are based on the
notion of \emph{capability} associated to a channel name, which makes
it possible to grant only the possibility
of emitting (the output capability) or receiving (the input
capability) on a given channel.  
A subtyping relation is introduced to express the fact that a channel
for which both capabilities are available can be coerced to a channel
where only one is used.
Intuitively, being able to
have a more precise description of how a name will be used can help in
asserting termination of a process: in $P_0$, only the output
capability on $x$ is used, which makes it possible to send a name of
smaller level on $a$; in $P_1$, symmetrically, $y$ can have a bigger
level than expected, as only the input capability on $y$ is
transmitted.

The overall setting of this work is presented in
Section~\ref{section:io}, together with the definition of our type
system. 
This system is strictly more expressive than previously existing
level-based systems. We show in particular that our approach yields a
form of `level polymorphism', which can be interesting in terms of
programming, by making it possible to send several requests to a given
\emph{server} (represented as a process of the form $!f(x).P$, which
corresponds to the typical idiom for functions or servers in
the $\pi$-calculus) with arguments that must have different levels,
because of existing dependencies between them.

\smallskip

In order to study more precisely the possibility to handle terminating
functions (or servers) in our setting, we analyse an encoding of the
$\lambda$-calculus in the $\pi$-calculus. We have presented
in~\cite{DBLP:conf/birthday/DemangeonHS09} a counterexample showing
that existing level-based approaches are not able to recognise as
terminating the image of the simply-typed $\lambda$-calculus (\STlam)
in the $\pi$-calculus (all processes computed using such an encoding
terminate~\cite{SW01}). We show that this counterexample is typable in
our system, but we exhibit a new counterexample, which is not. This
shows that despite the increased expressiveness, level-based methods
for the termination of $\pi$-calculus processes fail to capture
terminating sequential computation as expressed in \STlam.


To accommodate functional computation, we exploit the work presented
in~\cite{demangeon:hirschkoff:sangiorgi:concur10}, where an
\emph{impure} $\pi$-calculus is studied. Impure means here that one
distinguishes between two kinds of names. On one hand,
\emph{functional names} are subject to a certain discipline in their
usage, which intuitively arises from the way names are used in the
encoding of \STlam{} in the $\pi$-calculus. On the other hand,
\emph{imperative names} do not obey such conditions, and are called so
because they may lead to forms of stateful computation (for instance,
an input on a certain name is available at some point, but not later,
or it is always available, but leads to different computations at
different points in the execution).

In~\cite{demangeon:hirschkoff:sangiorgi:concur10}, termination is
guaranteed in an impure $\pi$-calculus by using a level-based approach
for imperative names, while functional names are dealt with
separately, using a semantics-based
approach~\cite{yoshida:berger:honda:termination:ic,sangiorgi:mscs:termination}.
We show that that type system, which combines both approaches for
termination in the $\pi$-calculus, can be revisited in our setting. We
also demonstrate that the resulting system improves in terms of
expressiveness over~\cite{demangeon:hirschkoff:sangiorgi:concur10},
from several points of view.

\medskip

Several technical aspects in the definition of our type systems are
new with respect to previous works. First of all, while the works we
rely on for termination adopt a presentation {\`a} la Church, where
every name has a given type a priori, we define our systems {\`a} la
Curry, in order to follow the approach for i/o-types
in~\cite{DBLP:journals/mscs/PierceS96}. As we discuss below, this has
some consequences on the soundness proof of our systems.
%
Another difference is in the presentation of the impure calculus:
\cite{demangeon:hirschkoff:sangiorgi:concur10} uses a specific
syntactical construction, called \texttt{def}, and akin to a
\texttt{let\,..\,in} construct, to handle functional names. By a
refinement of i/o-types, we are able instead to enforce the discipline
of functional names without resorting to a particular syntactical
construct, which allows us to keep a uniform syntax.

\medskip

We finally discuss 
type inference, by focusing on the
case of the \emph{localised $\pi$-calculus} (\Lpi). \Lpi{} corresponds
to a certain restriction on i/o-types. This restriction is commonly
adopted in implementations of the $\pi$-calculus. We describe a sound
and complete type inference procedure for our level-based
system 
in \Lpi.
We also provide some remarks about inference for i/o-types in the
general case.

\paragraph{Paper outline.}

Section~\ref{section:io} presents our type system, and shows that it
guarantees termination. We study its expressiveness in
Section~\ref{sec:expr}. 
Section~\ref{sec:inference} discusses type
inference, and we give concluding remarks in Section~\ref{sec:concl}.
For lack of space, several proofs are omitted from this version of the
paper.

\section{A Type System for Termination with Subtyping}
\label{section:io}

\subsection{Definition of the Type System}

\paragraph{Processes and types.}

We work with an infinite set of \emph{names}, ranged over using
$a,b,c,\dots,x,y,\dots$. Processes, ranged over using $P, Q, R,\dots$,
are defined by the following grammar (\unit{} is a constant, and we
use $v$ for values):
\begin{mathpar}
  P
  \defgram
  \nil \OR P_1|P_2 \OR \outm a v\OR (\new a)\,P \OR a(x).P\OR !a(x).P
  \and
  v \defgram \unit\OR a
  \enspace.
\end{mathpar}

The constructs of restriction and (possibly replicated) input are
binding, and give rise to the usual notion of $\alpha$-conversion. We
write \fnames{P} for the set of free names of process $P$, and
$P[b/x]$ stands for the process obtained by applying the
capture-avoiding substitution of $x$ with $b$ in $P$.

We moreover implicitly assume, in the remainder of the paper, that all
the processes we manipulate are written in such a way that all bound
names are pairwise distinct and are different from the free names. 
This may in particular involve some implicit renaming
of processes when a reduction is performed.

The grammar of types is given by:
\begin{mathpar}
  T
  \defgram
  \chan{k}{T}\OR \ichan{k}{T} \OR \ochan{k}{T}\OR\Unit
  \enspace,
\end{mathpar}
\noindent where $k$ is a natural number that we call a \emph{level},
and \Unit{} stands for the \texttt{unit} type having \unit{} as only
value. 
A name having type \chan k T has level $k$, and can be used to send or
receive values of type $T$, while type \ichan k T (resp.\ \ochan k T)
corresponds to having only the input (resp.\ output) capability.

Figure~\ref{fig:types:io} introduces the subtyping and typing
relations. We note $\leq$ both for the subtyping relation and for the
inequality between levels, as no ambiguity is possible. 
We can remark that the input (resp.\ output) capability is covariant
(resp.\ contravariant) w.r.t.\ $\leq$, but that the opposite holds for
levels: input requires the supertype to have a smaller level.

\Gam{} ranges over typing environments, which are partial maps from
names to types -- we write $\Gam(a)=T$ if \Gam{} maps $a$ to
$T$. \dom{\Gam}, the domain of \Gam, is the set of names for which
\Gam{} is defined, and $\Gam, a:T$ stands for the typing environment
obtained by extending \Gam{} with the mapping from $a$ to $T$, this
operation being defined only when $a\notin\dom{\Gam}$.

The typing judgement for processes is of the form \typj{\Gam}{P:w},
where $w$ is a natural number called the \emph{weight} of $P$.
The weight corresponds to an upper bound on the maximum level of a
channel that is used 
in output in $P$, without this output occurring under a
replication. This can be read from the typing rule for output messages
(notice that in the first premise, we require the output capability on
$a$, which may involve the use of subtyping) and for parallel
composition.
As can be seen by the corresponding rules, non replicated input prefix
and restriction do not change the weight of a process.  The weight is
controlled in the rule for replicated inputs, where we require that
the level of the name used in input is strictly bigger than the weight
of the continuation process. 
%
%
 We can also observe that working in a synchronous calculus would
 involve a minor change: typing a synchronous output $\outm a v.P$
 would be done essentially like typing $\outm a v\,|\,P$ in our setting
 (with no major modification in the correctness proof for our type
 system).

As an abbreviation, we shall omit the content of messages in prefixes,
and write $a$ and \out a for $a(x)$ and \outm{ a}{ \unit} respectively,
when $a$'s type indicates that $a$ is used to transmit values of type
\Unit.


%


\begin{expl}\label{expl:simple:ex}
  The process $!a(x).\outm x t \,|\, \outm a p\,|\,\outm a q\,|\,
  !p(z).\outm q z$ from Section~\ref{sec:intro} can be typed in our
  type system: we can set 
 $a:\chan{3}{\ochan 2 T},
  p:\chan 2 T, q:\ochan 1 T$. Subtyping on levels is at work in order
  to typecheck the subterm \outm a q.  We provide a more complex term,
  which can be typed using similar ideas, in
  Example~\ref{expl:level-polymorphism} below.
\end{expl}

\begin{figure}[h]
  Subtyping
  \quad
  $\leq$ is the least relation that is reflexive, transitive,
and satisfies the following rules:
  \begin{mathpar}
  \inferrule[\rname{Subt-\#I}]{~}{\chan{k}{T} \leq \ichan{k}{T}}
  \and
  \inferrule[\rname{Subt-\#O}]{~}{\chan{k}{T} \leq \ochan{k}{T}}
  \and
  \inferrule[\rname{Subt-II}]
  {{T \leq S} \and {k_1 \leq k_2}}
  {\ichan{k_2}{T} \leq \ichan{k_1}{S}}
  \and
  \inferrule[\rname{Subt-OO}]
  {{T \leq S}\and {k_1 \leq k_2}}
  {\ochan{k_1}{S} \leq \ochan{k_2}{T}}
\end{mathpar}
Typing values
  \begin{mathpar}
    \inferrule*{~}{\typj{\Gam}{\unit:\Unit}}
    \and
    \inferrule*{\Gam(a) = T}{\typj{\Gam}{a:T}}
    \and
    \inferrule*{\typj{\Gam}{a:T}\and T\leq U
    }{
      \typj{\Gam}{a:U}
    }
    \end{mathpar}
Typing processes
    \begin{mathpar}
    \inferrule[\rname{Nil-S}]{~}{\typj{\Gam}{\nil:0}}
    \and
    \inferrule[\rname{Out-S}]{\typj{\Gam}{a:\ochan k T}\and\typj{\Gam}{v:T}
    }{
      \typj{\Gam}{\outm a v:k}
    }
    \and
    \inferrule[\rname{Inp-S}]{\typj{\Gam}{a:\ichan k T}\and \typj{\Gam,x:T}{P:w}
    }{
      \typj{\Gam}{a(x).P:w}
    }
    \and
    \inferrule[\rname{Rep-S}]{\typj{\Gam}{a:\ichan k T}\and \typj{\Gam,x:T}{P:w}\and
      k>w
    }{
      \typj{\Gam}{!a(x).P:0}
    }
    \and
    \inferrule[\rname{Res-S}]{\typj{\Gam,a:T}{P:w} }{ \typj{\Gam}{(\new a)\,P:w}}
    \and
    \inferrule[\rname{Par-L}]{\typj{\Gam}{P_1:w_1}\and\typj{\Gam}{P_2:w_2}
    }{
      \typj{\Gam}{P_1|P_2:\max{w_1,w_2}}
    }
  \end{mathpar}
  \caption{Typing and Subtyping Rules}
  \label{fig:types:io}
\end{figure}
\begin{figure}[h]
  \begin{mathpar}
    \inferrule{~}{a(x).P~|~ \outm a v\,\red\, P[v/x]}\and
    \inferrule{~}{!a(x).P~|~ \outm a v\,\red\, !a(x).P~|~P[v/x]}
    \\
    \inferrule{P\red P'}{P|Q\red P'|Q}\and
    \inferrule{P\red P'}{(\new a)\,P\red (\new a)\,P'}\and
    \inferrule{Q\equiv P\and P\red P'\and P'\equiv Q'}{Q\red Q'}
  \end{mathpar}
  \caption{Reduction of Processes}
  \label{fig:reduction}
\end{figure}

\paragraph{Reduction and termination.}

The definition of the operational semantics relies on a relation of
structural congruence, noted $\equiv$, which is the smallest
equivalence relation that is a congruence, contains
$\alpha$-conversion, and satisfies the following axioms:
\begin{mathpar}
  P|(Q|R) \,\equiv\, (P|Q)|R
  \and
  P|Q \,\equiv\, Q|P
  \and
  P|\nil\,\equiv\, P
  \\
  (\new a)\,\nil\,\equiv\,\nil
  \and
  (\new a)(\new b)\,P\,\equiv\, (\new b)(\new a)\,P
  \and
  (\new a)\,(P|Q)\,\equiv\, P\,|\,(\new a)\,Q\mbox{ if }a\notin\fnames{P}
\end{mathpar}
Note in particular that there is no structural congruence law
for replication. 

Reduction, written \red, is defined by the rules of
Figure~\ref{fig:reduction}.  

\begin{defi}[Termination]
  A process $P$ \emph{diverges} if there exists an infinite sequence of
  processes $(P_i)_{i\geq 0}$ such that $P=P_0$ and for any $i$,
  $P_i\red P_{i+1}$.
  $P$ \emph{terminates} (or $P$ is terminating) if $P$
  does not diverge.
\end{defi}

\subsection{Properties of the Type System}

We first state some (mostly standard) technical properties satisfied by
our system.

\begin{lem}
  \label{lem:bigger_weight}
  If \typj{\Gam}{P:w} and $w\neq 0$ then for any $w'\geq w$, \typj{\Gam}{P:w'}.
\end{lem}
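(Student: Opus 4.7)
The plan is to proceed by induction on the derivation of \typj{\Gam}{P:w}, since the weight is uniquely determined by how the typing rules propagate it. The hypothesis $w\neq 0$ is crucial because two of the typing rules, namely \textsc{Nil-S} and \textsc{Rep-S}, force the weight to be exactly $0$, and their conclusions cannot be re-derived with a nonzero weight; the hypothesis guarantees that the last rule of the derivation is not one of these two.

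For the base case, the only remaining rule producing a judgement about a single term is \textsc{Out-S}, which gives $\typj{\Gam}{\outm a v:k}$ from $\typj{\Gam}{a:\ochan k T}$ and $\typj{\Gam}{v:T}$. To raise the weight from $k$ to any $k'\geq k$, I will use the subtyping rule \textsc{Subt-OO}, which (by reflexivity of $\leq$ on $T$) gives $\ochan{k}{T}\leq\ochan{k'}{T}$. Subsumption on values then yields $\typj{\Gam}{a:\ochan{k'}{T}}$, and a fresh application of \textsc{Out-S} concludes. This is the only case where subtyping is actually used, and it is the conceptual heart of the lemma.

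The inductive cases \textsc{Inp-S} and \textsc{Res-S} are routine: in each one the weight of the whole process equals the weight of the subterm, so if the overall weight is nonzero, so is the subterm's, and the induction hypothesis applies directly. The case \textsc{Par-L}, where $w=\max(w_1,w_2)\neq 0$, requires a small case analysis: at least one of $w_1, w_2$ is nonzero, and one may apply the induction hypothesis to each $P_i$ with $w_i\neq 0$ to obtain weight $w'$; any subterm whose weight is already $0$ is left untouched, and $\max(w',0)=w'$, so \textsc{Par-L} reassembles the derivation with weight $w'$.

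The only step I expect to require care is the \textsc{Out-S} case, since it is the only one where the intuition "just weaken the weight" does not correspond to a primitive rule and one must genuinely invoke subtyping to cast the output capability to a higher level. Once that case is settled, the remaining cases are purely bookkeeping along the derivation.
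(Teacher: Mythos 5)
Your proof is correct and follows essentially the same route as the paper's: induction on the typing derivation, with the only non-trivial case being \textsc{Out-S}, handled by \textsc{Subt-OO} plus subsumption to raise the level of the output capability from $k$ to $k'$. The treatment of the remaining cases (vacuity of \textsc{Nil-S} and \textsc{Rep-S}, direct propagation for \textsc{Inp-S} and \textsc{Res-S}, and the $\max$ bookkeeping for \textsc{Par-L}) matches the paper's argument.
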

\iflong
\begin{proof}
   We reason by induction on the typing derivation. 
  \begin{itemize}
  \item If $P = 0$ or $P = !a(x).P_1$ then $w=0$.
  \item If $P = a(x).P_1$ or $P = \new a.P_1$ then the weight $w$ of $P$ is equal to the weight of $P_1$ on which we have that any $w'\geq w$, \typj{\Gam}{P_1:w'} through induction.
  \item If $P = P_1~|~P_2$ then $w = \max{w_1, w_2}$. There is $w' = \max{w_1', w_2'}$, with $w_1'\geq w_1$, $w_2'\geq w_2$ and therefore $w'\geq w$.
 \item If $P= \outm a b$. We have that \typj{\Gam}{a:\ochan{k}{T}}
   and \typj{\Gam}{b:T}. We have the following derivation:
 \begin{mathpar}
    \inferrule[]
    {
      \inferrule[]
      {
        \typj{\Gam}{a:\ochan{k}{T}}
        \and
        \inferrule[]
        {
          T \leq T
          \and
          k \leq k'
        }
        {\ochan{k}{T} \leq \ochan{k'}{T}}
      }
      {\typj{\Gam}{a:\ochan{k'}{T}}}
      \and 
      \typj{\Gam}{b:T}
    }
    {\typj{\Gam}{\outm a b:k'}}
\end{mathpar}
  \end{itemize}
\end{proof}
\fi



\iflong
\begin{lem}[Strengthening]
  If \typj{\Gam,x:T}{P:w} and $x\notin\fnames{P}$, then \typj{\Gam}{P:w}.
\end{lem}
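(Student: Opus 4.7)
The plan is to prove the lemma by induction on the derivation of $\typj{\Gam,x:T}{P:w}$, after first establishing an auxiliary strengthening property for value typing: if $\typj{\Gam,x:T}{v:S}$ and $v\neq x$, then $\typj{\Gam}{v:S}$. This auxiliary fact is immediate by induction on the derivation of the value judgement, since the three relevant rules (the \Unit{} rule, the name-lookup rule, and the subsumption rule via $\leq$) either do not consult $x$ at all, or consult $\Gam$ at a name other than $x$; the subsumption step simply forwards the induction hypothesis.

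For the main induction, each case of Figure~\ref{fig:types:io} is handled as follows. In the \textsc{Nil-S} case, $P=\nil$ and the conclusion is immediate with $w=0$. In the \textsc{Out-S} case with $P=\outm a v$, the hypothesis $x\notin\fnames{P}$ gives both $a\neq x$ and, if $v$ is a name, $v\neq x$; the value strengthening lemma applied to $\typj{\Gam,x:T}{a:\ochan k {T'}}$ and $\typj{\Gam,x:T}{v:T'}$ yields the two premises needed to re-apply \textsc{Out-S} over $\Gam$ alone. The \textsc{Par-L} case is by a direct appeal to the induction hypothesis on each component, taking the max of the weights as before.

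For the binding cases \textsc{Inp-S}, \textsc{Rep-S} and \textsc{Res-S}, the global convention that bound names are pairwise distinct and distinct from the free names guarantees that the bound name introduced by the rule (be it $y$ in an input, or $a$ in a restriction) differs from $x$. The derivation of the premise then has shape $\typj{\Gam,x:T,y:S}{P_1:w}$ (and analogously for restriction); exchange of disjoint bindings in the environment is admissible, so this is the same judgement as $\typj{\Gam,y:S,x:T}{P_1:w}$, to which the induction hypothesis applies since $x\notin\fnames{P_1}$ (this holds because $x\notin\fnames{P}$ and $x$ is not the bound name). Reassembling with the original rule, together with the side condition $k>w$ that is carried unchanged in the \textsc{Rep-S} case, produces the desired judgement over $\Gam$.

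I expect no genuine obstacle: the argument is a textbook weakening/strengthening manipulation. The only mildly subtle point is the interaction with subtyping in the value judgement, which is why I first isolate the value-level strengthening statement; and the only bookkeeping point is the implicit use of environment exchange in the binding cases, which is justified by the freshness convention stated in the paper.
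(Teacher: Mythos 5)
Your proof is correct and follows essentially the same route as the paper's: induction on the typing derivation, with the observation that $x$ plays no role in typing the subject of a prefix and that bound names (which are distinct from $x$ since environment extension requires freshness) are simply re-added to the context before invoking the induction hypothesis. The paper only sketches the input case, whereas you make explicit the value-level strengthening and the environment-exchange bookkeeping, but there is no difference in substance.
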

\begin{proof}
  By induction on the typing derivation of $P$. We only study one case, as the others are very similar:
  \begin{itemize}
  \item \typj{\Gam}{a(x).P:w}, with \typj{\Gam}{a:\ichan k T} and \typj{\Gam, x:T}{P:w}. Here $a\in\fnames{P}$ and $x\notin\fnames{P}$. For the derivation of $a$, $x$ is not useful and for the derivation of $P$, $x$ is added to the typing context.
  \end{itemize}
\end{proof}
\fi



\iflong
\begin{lem}[Weakening]
  If \typj{\Gam}{P:w}, then for all $x\notin\fnames{P}$ and for all $T$,
  we have \typj{\Gam,x:T}{P:w}.
\end{lem}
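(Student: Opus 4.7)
The plan is to proceed by induction on the derivation of $\typj{\Gam}{P:w}$, with a case analysis on the last typing rule applied. Before tackling processes, I would first establish a value-level analogue: if $\typj{\Gam}{v:T}$ and $x \notin \fnames{v}$, then $\typj{\Gam, x:U}{v:T}$. This is immediate by induction on the derivation of $\typj{\Gam}{v:T}$ (the only three rules are the axiom for $\unit$, the lookup rule $\Gam(a)=T$, which only requires $a \neq x$, and the subsumption rule, which simply passes through the IH).

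For the main statement, each process-typing rule is handled in turn. The \textsc{Nil-S} and \textsc{Par-L} cases are trivial (direct reapplication of the rule, possibly after invoking the IH on each parallel component). The \textsc{Out-S} case for $\outm a v$ is handled by applying the value weakening lemma to both premises $\typj{\Gam}{a:\ochan k T'}$ and $\typj{\Gam}{v:T'}$, using the fact that $x \notin \fnames{\outm a v} = \{a\} \cup \fnames{v}$. For \textsc{Res-S} on $(\new a)P_1$, the convention that all bound names are distinct from free names gives $x \neq a$, so the extension $\Gam, a:T', x:U$ is well-defined; reorder and apply the IH.

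The cases for \textsc{Inp-S} and \textsc{Rep-S} are essentially identical and constitute the only genuinely subtle step. For $a(y).P_1$, the side condition $x \notin \fnames{a(y).P_1}$ gives $x \neq a$ and $x \notin \fnames{P_1} \setminus \{y\}$; the global bound-names convention ensures $x \neq y$, so in fact $x \notin \fnames{P_1}$. Then we invoke value weakening on $\typj{\Gam}{a:\ichan k T'}$ and the IH on $\typj{\Gam, y:T'}{P_1:w}$, after observing that $\Gam, y:T', x:U = \Gam, x:U, y:T'$ as typing environments (since $x,y \notin \dom\Gam$ and $x \neq y$). The resulting judgements recombine via \textsc{Inp-S}; for \textsc{Rep-S}, the same argument works and the side condition $k > w$ is unchanged by the extension of the environment.

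The main (very mild) obstacle is just this bookkeeping about bound names: one has to be careful that the convention on distinct bound/free names is indeed sufficient to avoid any clash with $x$, possibly invoking $\alpha$-conversion in Inp-S, Rep-S and Res-S if a concrete choice of representative happens to reuse $x$. No interaction with subtyping or with the weight $w$ arises, since weights are defined purely in terms of levels of output-typed channels, which are unaffected by adding an unrelated binding $x:U$ to $\Gam$.
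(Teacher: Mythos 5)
Your proposal is correct and follows essentially the same route as the paper: induction on the typing derivation, with the input/replicated-input cases being the only ones requiring care, resolved by noting that $x$ differs from both the subject $a$ and the bound variable $y$ (by freshness/the bound-name convention) so that the environment extension commutes. The explicit value-level weakening lemma and the remark about $\alpha$-conversion are just slightly more detailed bookkeeping than the paper's one-case sketch.
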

\begin{proof}  
  By induction on the typing derivation of $P$.
  \begin{itemize}
  \item \typj{\Gam}{a(y).P:w}, with \typj{\Gam}{a:\ichan k T} and \typj{\Gam, y:T}{P:w}. From $x\notin\dom{\Gam}$ we have that $a\neq x$. Moreover $x\neq y$, as $y$ is a fresh name.  
  \end{itemize}
\end{proof}
\fi

\iflong
\begin{lem}
  If \typj{\Gam}{P:w}, then $\fnames{P}\subseteq\dom{\Gam}$.
\end{lem}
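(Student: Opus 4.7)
The plan is to proceed by induction on the derivation of \typj{\Gam}{P:w}. Before doing so, I would note a small auxiliary observation on value typing: whenever \typj{\Gam}{a:T} holds for a name $a$, we must have $a\in\dom{\Gam}$. This is immediate by a short induction on the derivation, since the only axiom that introduces a name on the left-hand side requires $\Gam(a)=T$, and the subsumption rule preserves the name while only changing the type.

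Then the main induction is case-by-case on the last typing rule. The cases \textsc{Nil-S} and (trivially) situations where $\fnames{P}=\emptyset$ are immediate. For \textsc{Out-S}, the free names of $\outm a v$ are $\{a\}\cup\fnames{v}$, and both $a$ and (if it is a name) $v$ are shown to lie in $\dom{\Gam}$ via the preliminary observation applied to the two premises. For \textsc{Par-L}, we apply the induction hypothesis to each conjunct and take the union.

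The only mildly delicate cases are the binders. For \textsc{Inp-S} (and analogously \textsc{Rep-S}), from the premise \typj{\Gam,x:T}{P:w} the induction hypothesis gives $\fnames{P}\subseteq\dom{\Gam}\cup\{x\}$; combined with $a\in\dom{\Gam}$ from the other premise, this yields $\fnames{a(x).P}=\{a\}\cup(\fnames{P}\setminus\{x\})\subseteq\dom{\Gam}$. For \textsc{Res-S}, the premise \typj{\Gam,a:T}{P:w} gives $\fnames{P}\subseteq\dom{\Gam}\cup\{a\}$ by IH, and hence $\fnames{(\new a)\,P}=\fnames{P}\setminus\{a\}\subseteq\dom{\Gam}$.

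I do not foresee a real obstacle: the argument is entirely routine, and the only point worth stating explicitly is the auxiliary fact about value typing under subsumption, which is what makes the name/domain membership argument go through despite the presence of the subtyping rule on values.
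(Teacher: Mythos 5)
Your proof is correct and follows essentially the same route as the paper: an induction over the typing derivation (the paper phrases it as induction on the structure of $P$, which amounts to the same thing since subsumption only applies to values), handling the binder cases by subtracting the bound name from the domain of the extended environment. The auxiliary observation that \typj{\Gam}{a:T} forces $a\in\dom{\Gam}$ even in the presence of subsumption is used implicitly in the paper's proof; making it explicit, as you do, is a reasonable refinement rather than a departure.
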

\begin{proof}
  By induction on the structure of $P$. To exemplify the reasoning used during the induction we will consider the case $P= a(x).Q$. The inference rule for $a(x).P$ has as hypothesis \typj{\Gam, x:T}{P:w} and \typj{\Gam}{a:T_a}. $x$ is a bound name and is therefore of no interest. From \typj{\Gam}{a:T_a} we have that $a\in\dom{\Gam}$ regardless is it is a free name or not.
\end{proof}
\fi

\begin{prop}[Narrowing]\label{prop:narrowing}
 If \typj{\Gam, x:T}{P:w} and $T' \leq T$,  then \typj{\Gam,
   x:T'}{P:w'} for some $w' \leq w$.

\end{prop}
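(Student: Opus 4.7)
My plan is to prove narrowing by induction on the derivation of $\typj{\Gam, x:T}{P:w}$, showing in fact that taking $w' = w$ suffices (the statement is phrased with an inequality only to provide slack that is not strictly needed). The heart of the argument is an auxiliary observation about value typings: if $\typj{\Gam, x:T}{v:S}$ and $T' \leq T$, then $\typj{\Gam, x:T'}{v:S}$. Indeed, any derivation of a value typing combines an environment-lookup axiom with a (possibly empty) sequence of subtyping steps. If $v \neq x$, the lookup returns the same type and the derivation is unchanged. If $v = x$, the original derivation establishes $T \leq S$ by composing the subtyping steps, so by transitivity $T' \leq S$, and I can derive $\typj{\Gam, x:T'}{x:T'}$ and then apply subtyping to conclude $\typj{\Gam, x:T'}{x:S}$.

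With this in hand, I would run the induction case by case over the process-typing rules. The cases \textsc{Nil-S}, \textsc{Out-S}, \textsc{Par-L}, and \textsc{Res-S} are immediate: for each I invoke the auxiliary fact on any value premises, apply the induction hypothesis to any process subderivations, and reassemble using the same rule. Bound names introduced by \textsc{Res-S}, \textsc{Inp-S}, and \textsc{Rep-S} are assumed distinct from $x$ by $\alpha$-conversion (as stipulated in the paper's naming convention), so narrowing on $x$ does not interact with them, and the extended environment $\Gam, x:T, y:T_0$ can be narrowed to $\Gam, x:T', y:T_0$ by a direct IH application.

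The only case that requires an explicit side-condition check is \textsc{Rep-S}, where the rule imposes $k > w_c$ on the weight $w_c$ of the continuation. After narrowing, the IH yields some weight $w_c' \leq w_c < k$, so the strict inequality survives and \textsc{Rep-S} applies again, producing a replicated input of weight $0$ as before. This is the only place where the numerical comparison on levels really matters for the argument. Outside of this check I do not anticipate any obstacle: this is a textbook narrowing induction, and the only use of the subtyping relation's structure is the transitivity step $T' \leq T \leq S$ in the auxiliary lemma, which is exactly the standard pattern for systems with a subsumption rule.
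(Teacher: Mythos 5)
Your proof is correct and follows essentially the same route as the paper's: an induction on the typing derivation, with the narrowed hypothesis entering only through the value-typing premises. Your factoring of that step into an auxiliary lemma (environment lookup plus transitivity of $\leq$) is cleaner than the paper's nested case analysis on the derivation of $T' \leq T$, and your observation that $w' = w$ can always be retained is a valid mild strengthening of the stated $w' \leq w$.
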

\iflong
\begin{proof}
  By induction on the derivation of \typj{\Gam}{P:w}. We will only
  consider the interesting cases.

  \begin{itemize}
  \item Case {\bf Out-S}.
    We have $P = \outm a b$, \typj{\Gam}{\outm a b :k} with \typj{\Gam}{a:\ochan{k}{T_0}} and \typj{\Gam}{b:T_0}.
    \begin{itemize}
    \item $x \neq {a,b}$. Trivial.
    \item $x = a$ and $x \neq b$. Since \typj{\Gam}{a:\ochan{k}{T_0}} we have $T = \ochan{k}{T_0}$. We proceed by doing an induction on $\subt{T'}{T}$:
      \begin{itemize}
      \item \textsc{Refl}: $\subt{T}{T}$. Trivial.
      \item \textsc{Trans}. We have by induction that there is $S$ such that $T'\leq S$, $S \leq T$. By applying rule \textsc{Subsum} twice and then rule  \textsc{Out-S} we obtain that \typj{\Gam, x:T'}{P:w}.
                \begin{mathpar} 
          \inferrule[]
          {
            \inferrule[]
            {
              \inferrule[]
              {
                \typj{\Gam}{x:T'}
                \and 
                T' \leq S
              }
              {\typj{\Gam}{x:S}}
              \and 
              {S \leq \ochan{k}{T_0}}
            }
            {\typj{\Gam}{x:\ochan{k}{T_0}}}
            \and 
            \typj{\Gam}{b:T_0}
          }
          {\typj{\Gam}{\outm x b:k}}
        \end{mathpar}     
      \item \textsc{Subt-\#O}: $\chan{k}{T_0} \leq \ochan{k}{T_0}$. We obtain \typj{\Gam, x:T'}{P:w} by applying rules \textsc{Subt-\#O} and \textsc{Subsum} on $x:T'$. 
         \begin{mathpar} 
          \inferrule[]
          {
            \inferrule[]
            {
              \typj{\Gam}{x:\chan{k}{T_0}}\and 
              \chan{k}{T_0} \leq \ochan{k}{T_0}
            }
            {\typj{\Gam}{x:\ochan{k}{T_0}}}
            \and 
            \typj{\Gam}{b:T_0}
          }
          {\typj{\Gam}{\outm x b:k}}
        \end{mathpar}        
      \item \textsc{Subt-OO}: $\ochan{k'}{S} \leq \ochan{k}{T_0}$.  We apply the rule \textsc{Subsum} on \typj{\Gam}{v:\ochan{k'}{S}} and  $\ochan{k'}{S} \leq \ochan{k}{T_0}$. We can then use the rule \textsc{Out-S}.
         \begin{mathpar}
          \inferrule[]
          {             
            \inferrule[]
            {
              \typj{\Gam}{x:\ochan{k'}{S}}
              \and             
              \inferrule[]
              {
                {T_0 \leq S} 
                \and 
                {k' \leq k''}
              }
              {\ochan{k'}{S} \leq \ochan{k''}{T_0}}              
            }
            {\typj{\Gam}{x:\ochan{k''}{T_0}}}
            \and 
            \typj{\Gam}{b:T_0}
          }
          {\typj{\Gam}{\outm x b:k''}}
        \end{mathpar}
        where $k' \leq k'' \leq k$.
      \end{itemize}
    \item $x = b$ and $x \neq a$. From $T_0=T$ we can rewrite \typj{\Gam}{a:\ochan{k}{T}}. As previously, we apply rule \textsc{Subsum} and \textsc{Out-S}.
      \begin{mathpar}
        \inferrule[]
        {
          \typj{\Gam}{a:\ochan{k}{T}}\and 
          \inferrule[]
            {
              \typj{\Gam}{x:T'} \and
              T' \leq T
            }
            { \typj{\Gam}{b:T}}            
          }
          {\typj{\Gam}{\outm a v:k}}
        \end{mathpar}              
    \item $x = a, b$. Not possible.
    \end{itemize}
    \item Case {\bf Rep-S}. We have $P =!a(y).P'$, with \typj{\Gam}{a:\ichan{k}{T_0}},
      \typj{\Gam, y:T_0}{P':w} and $k > w$. By induction \typj{\Gam, x:T'}{P':w}. Notice that we cannot discuss on whether $x = y$ or not, since the type of $y$ is given by the inference procedure by looking at the type of $a$. 
      \begin{itemize}
      \item $x \neq a$. Since \typj{\Gam, x:T'}{P':w}, then in $P =!a(y).P'$, \typj{\Gam, x:T'}{P:w}. 
      \item $x = a$. 
        Since $T = \ichan{k}{T_0}$, we can perform an induction on $\subt{T'}{T}$ and deduce $T'$. For the cases \textsc{Refl} and \textsc{Trans} the rules are similar to the ones above.
        \begin{itemize}
        \item \textsc{Subt-\#I}: $\chan{k}{T_0} \leq \ichan{k}{T_0}$. By applying rules \textsc{Subt-\#I} and \textsc{Subsum} we can rewrite $T'=\chan{k}{T_0}$ as $T'=\ichan{k}{T_0}$. 
          \begin{mathpar}          
          \inferrule[]
          {
            \inferrule[]
            {
              \typj{\Gam}{x:\chan{k}{T_0}}\and 
              \chan{k}{T_0} \leq \ichan{k}{T_0}
            }
            {\typj{\Gam}{x:\ichan{k}{T_0}}}
            \and
            \typj{\Gam, y:T_0}{P':w} 
            \and k > w
          }
          {\typj{\Gam}{!x(y).P':w} }
        \end{mathpar}
        \item \textsc{Subt-II}: $\ichan{k'}{S} \leq \ichan{k}{T_0}$. We apply the rule \textsc{Subsum} on \typj{\Gam}{x:\ichan{k'}{S}} and  $\ichan{k'}{S} \leq \ichan{k}{T_0}$. We can then use the rule \textsc{Inp-S}.
        \begin{mathpar}
          \inferrule[]
          {   
            \inferrule[]
            {
              \typj{\Gam}{x:\ichan{k'}{S}}
              \and             
              \inferrule[]
              {
                {S \leq T_0} 
                \and 
                {k \leq k'}
              }
              {\ichan{k'}{S} \leq \ichan{k}{T_0}}              
            }
            {\typj{\Gam}{x:\ichan{k}{T_0}}}
            \and 
            \typj{\Gam, y:T_0}{P':w} 
            \and 
            k > w
          }
          {\typj{\Gam}{!x(y).P':w}}
        \end{mathpar}            
      \end{itemize}

      \end{itemize}
  \end{itemize}
\end{proof}
\fi

\begin{lem}\label{lem:subject:congruence}
  If $P\equiv Q$, then \typj{\Gam}{P:w} iff \typj{\Gam}{Q:w}.
\end{lem}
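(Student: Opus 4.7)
\begin{proofsketch}
The plan is to proceed by induction on the derivation of $P \equiv Q$. Since $\equiv$ is defined as the smallest congruence containing $\alpha$-conversion and the listed axioms, it suffices to check (i) that each axiom preserves typability with the same weight in both directions, and (ii) that the congruence closure rules (reflexivity, symmetry, transitivity, and congruence under the process constructors) propagate the property. Parts (i) and (ii) for reflexivity, symmetry and transitivity are immediate; the congruence closure under $(\new a)\_$, $\_|Q$, $a(x).\_$ and $!a(x).\_$ follows by straightforward inspection of the typing rules, since in each case the weight of the enclosing process is determined, in the same way on both sides, from the weight of the immediate subterm(s).

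For the axioms, most cases are purely arithmetic. For $P|(Q|R) \equiv (P|Q)|R$ and $P|Q \equiv Q|P$ we use associativity and commutativity of $\max$ in rule \textsc{Par-L}. For $P|\nil \equiv P$ we use that \typj{\Gam}{\nil:0} and that $\max(w,0)=w$, so the weight of $P|\nil$ coincides with the weight of $P$ (using Lemma~\ref{lem:bigger_weight} is not needed here, only the equation). For $(\new a)\nil \equiv \nil$, both sides can be given weight $0$ under any $\Gam$: from the left, \textsc{Res-S} applied to \typj{\Gam,a:T}{\nil:0} with any $T$. For $(\new a)(\new b)P \equiv (\new b)(\new a)P$, two applications of \textsc{Res-S} on either side build the same judgement, using that $\Gam,a:T,b:S$ and $\Gam,b:S,a:T$ denote the same partial map (which is well-defined since $a\neq b$ by our convention on bound names), and the weight is unchanged by \textsc{Res-S}.

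The main case, which is also the main obstacle, is scope extrusion: $(\new a)(P|Q) \equiv P|(\new a)Q$ when $a \notin \fnames{P}$. From left to right, a derivation must end with \textsc{Res-S} over \textsc{Par-L}, giving \typj{\Gam,a:T}{P:w_P} and \typj{\Gam,a:T}{Q:w_Q} with total weight $\max(w_P,w_Q)$. Since $a \notin \fnames{P}$, the Strengthening lemma yields \typj{\Gam}{P:w_P}; then \textsc{Res-S} on $Q$ gives \typj{\Gam}{(\new a)Q:w_Q}, and \textsc{Par-L} reassembles the judgement with the same weight. The converse direction uses the Weakening lemma to extend $\Gam$ with $a:T$ in the derivation for $P$, after which \textsc{Par-L} followed by \textsc{Res-S} rebuilds the required derivation. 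Finally, $\alpha$-conversion is handled by the standard observation that typing derivations are stable under consistent renaming of bound names.
\end{proofsketch}
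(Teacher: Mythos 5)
Your proof is correct and follows essentially the same route as the paper's: induction on the derivation of $P\equiv Q$, with scope extrusion $(\new a)(P|Q)\equiv P\,|\,(\new a)Q$ as the only nontrivial case. You are in fact somewhat more careful than the paper's sketch, which silently drops the hypothesis $a:T$ from the environment when decomposing the derivation; your explicit appeal to the Strengthening and Weakening lemmas (both available in the paper) is exactly what is needed to make that step precise.
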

\iflong
\begin{proof}
  We reason by induction on $P \equiv Q$. We will only prove the case where
  $P~|~(\new a)Q \equiv (\new a)(P~|~Q)$, both with weight $n$. 
  By applying rules \textsc{Par-L} and \textsc{Res-L} on $P~|~(\new a)Q$ we have that \typj{\Gam}{P:n_1} and \typj{\Gam}{Q:n_2}, with $n = max(n_1, n_2)$.
  We apply again the rule \textsc{Par-L} on $P$ and $Q$ to obtain \typj{\Gam}{(P~|~Q):n}. 
  From rule \textsc{Res-L} we have that process $(\new a)(P~|~Q)$ has weight $n$.
  Following a similar reasoning we can prove that if \typj{\Gam}{(\new a)(P~|~Q):n} then \typj{\Gam}{P~|~(\new a)Q:n}.
\end{proof}
\fi

\begin{lem}\label{lem:subject:substitution}
  If \typj{\Gam,x:T}{P:w}, \typj{\Gam}{b:T'} and $T' \leq T$ then
  \typj{\Gam}{P[b/x]:w'}, for some $w'\leq w$.
\end{lem}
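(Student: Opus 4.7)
My plan is to decompose the statement into two independent steps. First, I would invoke the Narrowing lemma (Proposition~\ref{prop:narrowing}): from $\typj{\Gam, x:T}{P:w}$ and $T' \leq T$ we obtain $\typj{\Gam, x:T'}{P:w_0}$ for some $w_0 \leq w$. Second, I would prove an auxiliary substitution lemma with \emph{equal} types, namely that $\typj{\Gam, x:T'}{P:w_0}$ and $\typj{\Gam}{b:T'}$ together imply $\typj{\Gam}{P[b/x]:w_0}$. Composing the two steps yields $\typj{\Gam}{P[b/x]:w'}$ with $w' = w_0 \leq w$, which is the conclusion we seek.

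The auxiliary substitution lemma would be proved by induction on the derivation of $\typj{\Gam, x:T'}{P:w_0}$. The cases for $\nil$, parallel composition and restriction are immediate by applying the induction hypothesis to subprocesses, relying on the Barendregt-style bound-name convention recalled in the paper (so that in the restriction case the bound name is distinct from $x$ and from $\fnames{b}$, avoiding capture). The input and replicated-input cases $a(y).P_1$ and $!a(y).P_1$ are similar: by the name convention $y \neq x$, and $a$ is either a free name distinct from $x$ (leaving its typing unchanged) or equal to $x$, which brings us to the delicate case below. In the Rep-S case, the side-condition $k > w_0$ on the replicated input's level is independent of the substitution, hence preserved.

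The main obstacle lies in the value positions where $x$ actually occurs, that is, when $x$ appears as the subject $a$ or the argument $v$ in an output $\outm{a}{v}$, or as the subject of an input or replicated input. In such a case the typing rule of the enclosing process requires $x$ to have a specific shape, for instance $\ochan{k}{S}$ in the Out-S rule, while the environment only records $x:T'$. This gap is bridged precisely by the subsumption rule on values: any derivation of $\typj{\Gam, x:T'}{x:U}$ must be obtained from the axiom $\typj{\Gam, x:T'}{x:T'}$ followed by a (possibly empty) chain of subsumption steps witnessing $T' \leq U$. I can apply the very same chain of subsumptions to $\typj{\Gam}{b:T'}$ to obtain $\typj{\Gam}{b:U}$, and plug this judgement into the derivation in place of the original typing of $x$. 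The remaining parts of the derivation do not mention $x$ and are kept unchanged, so that the weight $w_0$ is preserved.

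Putting both phases together yields $\typj{\Gam}{P[b/x]:w'}$ with $w' \leq w$, as required. I expect the subsumption bookkeeping described above to be the only genuinely delicate point; the strict decrease in weight, when it occurs, is entirely absorbed by the Narrowing step, while the substitution step itself is weight-preserving.
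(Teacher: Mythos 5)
Your proposal is correct and follows essentially the same route as the paper, whose proof is only the one-line sketch ``consequence of Narrowing, as we replace $x$ by a name of smaller type'': your decomposition into a Narrowing step followed by an equal-type substitution induction (with the inversion-on-subsumption argument for occurrences of $x$ in subject or argument position) is precisely the fleshed-out version of that sketch. The only bookkeeping you leave implicit is the final strengthening step removing $x:T'$ from the environment once $x$ no longer occurs free, which the paper covers by its Strengthening lemma.
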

\begin{proofsketch}
%
  This is a consequence of Lemma~\ref{prop:narrowing}, as we
  replace $x$ by a name of smaller type.
\end{proofsketch}



\begin{thm}[Subject reduction]\label{thm:subject:reduction}
  If \typj{\Gam}{P:w} and $P\red P'$, then \typj{\Gam}{P':w'} for some
  $w'\leq w$.
\end{thm}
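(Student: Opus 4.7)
\begin{proofsketch}
The plan is to proceed by induction on the derivation of $P \red P'$, the only non-routine cases being the two communication axioms.

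For the base case $a(x).P_1 \,|\, \outm a v \red P_1[v/x]$, inversion via rule \textsc{Par-L} on $\typj{\Gam}{P:w}$ yields $\typj{\Gam}{a(x).P_1:w_1}$ and $\typj{\Gam}{\outm a v:k_2}$ with $w = \max{w_1,k_2}$; further inversion (through any number of subsumption steps) gives $\typj{\Gam}{a:\ichan{k_1}{T_1}}$, $\typj{\Gam,x:T_1}{P_1:w_1}$, $\typj{\Gam}{a:\ochan{k_2}{T_2}}$, and $\typj{\Gam}{v:T_2}$. The key observation is that a type in $\Gam$ from which both an input and an output capability can be derived by subsumption must, by inspection of the subtyping rules, be of the form $\chan k T$; tracing the chain of subtyping steps on either side then yields $T_2 \leq T \leq T_1$ and $k_1 \leq k \leq k_2$. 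Lemma~\ref{lem:subject:substitution} applied to $\typj{\Gam,x:T_1}{P_1:w_1}$ and $\typj{\Gam}{v:T_2}$ with $T_2 \leq T_1$ then provides $\typj{\Gam}{P_1[v/x]:w'}$ for some $w' \leq w_1 \leq w$, as required.

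The replicated communication case is analogous: inversion of rule \textsc{Rep-S} forces the weight of $!a(x).P_1$ to be $0$ and imposes $k_1 > w_1$, so the substituted continuation has weight $w' \leq w_1 < k_1 \leq k \leq k_2 = w$, while the replicated input is preserved; re-typing the reduct with rule \textsc{Par-L} gives weight $\max{0,w'} = w' \leq w$. The inductive cases for parallel composition and restriction follow directly from the induction hypothesis together with the monotonicity of $\max$ in rule \textsc{Par-L} and the fact that rule \textsc{Res-S} preserves weight. The structural congruence case is discharged using Lemma~\ref{lem:subject:congruence} on both sides of the reduction.

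The main obstacle is the reconciliation of the two capability uses of $a$ at a communication step: careful inversion on both the typing and the subtyping derivations is needed to link the level and payload-type of the input capability $\ichan{k_1}{T_1}$ with those of the output capability $\ochan{k_2}{T_2}$ via a common two-sided type $\chan k T$ in $\Gam$. Once this alignment is established, Lemma~\ref{lem:subject:substitution} absorbs the residual subtyping slack $T_2 \leq T_1$, and the weight bookkeeping follows from $k_1 \leq k \leq k_2$.
\end{proofsketch}
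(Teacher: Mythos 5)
Your proof is correct and follows essentially the same route as the paper's: induction on the derivation of $P\red P'$, inversion at the communication redex to reconcile the input and output capabilities of $a$ through a common type $\chan{k}{T}$ recorded in \Gam{} (yielding $T_2\leq T\leq T_1$ and $k_1\leq k\leq k_2$), and an appeal to Lemma~\ref{lem:subject:substitution} to absorb the subtyping slack, with the level chain giving the weight bound in the replicated case. No gaps; your inequality directions are the correct ones for applying the substitution lemma.
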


\onlyshort{\begin{proofsketch}}
\iflong
\begin{proof}
\fi
  By induction over the derivation of $P\red P'$. 
\onlyshort{The most interesting case corresponds to the case where }
\iflong
  We only consider the following cases:
  \begin{itemize}
  \item $P\red P'$ with $Q\red Q'$, $P \equiv Q$ and  $P' \equiv Q'$.
    Using induction we derive that \typj{\Gam}{Q:w_q},
    \typj{\Gam}{Q':w_q'} and $w_q' \leq w_q$. We still have to prove that if $P \equiv Q$ (resp. $P' \equiv Q'$) then
\typj{\Gam}{P:w_q} (resp. \typj{\Gam}{P':w_q'}), which can be done by
induction on $P \equiv Q$.

\daniel{j'ai l'impression que cette preuve est ``a l'envers''; en
  plus, il faudrait utiliser le lemme sur structural congruence}

  \item $P = a(x).P_1~|~ \outm a v$ \red $P' = P_1[v/x]$. 

    We will first show that we can derive consistent types for $a$, $x$, $v$ and $P_1$ from $P$'s type and then deduce the type of  $P'$.
  
    From \typj{\Gam}{P:w}, by applying rule \textsc{Par-L}, we deduce that \typj{\Gam}{a(x).P_1 :w_1} and \typj{\Gam}{\outm a v :k} with $w = max(w_1, k)$. 
    On these, we can apply rules \textsc{Inp-S} and \textsc{Out-S} to have \typj{\Gam}{a:\ichan{k}{T}} and \typj{\Gam}{a:\ochan{k}{T'}}.
    Since the two channels have to communicate one with another, we have that  \typj{\Gam}{a:\chan{k}{T''}}, $T' \leq T'' \leq T.$
    This holds with our type system as \typj{\Gam}{a:\ichan{k}{T}} and \typj{\Gam}{a:\ochan{k}{T'}} can be obtained using the rules \textsc{Subt-II} and \textsc{Subt-OO} on $\chan{k}{T''}$.
    
    We use Lemma ~\ref{lem:subject:substitution} on $P_1[v/x]$ obtaining that \typj{\Gam}{P_1[v/x]:w_1'}, $w_1'\leq w_1$. 
    Since $w'=w_1'$ and $w = max(w_1, k)$ we conclude with $w'\leq w$.

  \item
\fi
 $P = !a(x).P_1~|~ \outm a v \red P' = !a(x).P_1~|~ P_1[v/x]$.
 By typability of $P$, we have \typj{\Gam}{!a(x).P_1:0}.
%
 Let $T_a=\Gam(a)$. Typability of
 $P$ gives \typj{\Gam,x:T}{P_1:w_1} for some $T$ and  $w_1$ such that
 $T_a\leq \ichan{k}{T}$ and $w_1<k\leq\lvl{a}$. Typability of \outm
 a v gives $T_a\leq \ochan{k'}{U}$ for some $k'\geq\lvl{a}$, with
 $w=k'$ and \typj{\Gam}{v:U}.
   The two constraints on $T_a$ entail $T\leq U$, and hence, by
   Lemma~\ref{lem:subject:substitution}, \typj{\Gam}{P_1[v/x]:w_2} for
   some $w_2\leq w_1\leq\lvl{a}\leq k'$.
   We then conclude \typj{\Gam}{P':w_2}.
\iflong
  \end{itemize}
\end{proof}
\fi
\onlyshort{\end{proofsketch}}

\newcommand{\termination}{\paragraph{Termination.}}
\iflong
\renewcommand{\termination}{\subsection{Termination}}
\fi

\termination Soundness of our type system, that is, that every typable
process terminates, is proved by defining a measure on processes that
decreases at each reduction step. A typing judgement
\typj{\Gam}{P:w} yields the weight $w$ of process $P$, but this notion
is not sufficient (for instance, $\out a\,|\,\out a\,|\,a \red
\out a$, and the weight is preserved).
We instead adapt the approach of~\cite{phd:demangeon}, and define the
measure as a multiset of natural numbers. This is done by induction
over the derivation of a typing judgement for the process.  We will use
\deriv{} to range over typing derivations, and write
$\deriv:\typj{\Gam}{P:w}$ to mean that \deriv{} is a derivation of
\typj{\Gam}{P:w}.

To deduce termination, we rely on the multiset extension of the
well-founded order on natural numbers, that we write $>_{mul}$.  $M_2
>_{mul} M_1 $ holds if $M_1 = N \uplus N_1$, $M_2 = N \uplus N_2$, $N$
being the maximal such multiset ($\uplus$ is multiset union), and for
all $e_1 \in N_1$ there is $e_2 \in N_2$ such that $e_1 < e_2$.
The relation  $>_{mul}$ is  well-founded. We write $M_1\geq_{mul}M_2$
if $M_1>_{mul}M_2$ or $M_1=M_2$.


\begin{defi}\label{def:measure}
  Suppose $\deriv:\typj{\Gam}{P:w}$. We define a multiset of natural
  numbers, noted \os{\deriv}, by induction over \deriv{} as follows:
  \begin{mathpar}
    {\mbox{If } \deriv:\typj{\Gam}{\nil}  \mbox{ then
      }  \os{\deriv} = \emptyset} 
    \and
    {\mbox{If } \deriv:\typj{\Gam}{\outm a b:k} \mbox{
        then }  \os{\deriv} = \set{lvl(a)}} 
    \and
    {\mbox{If } \deriv:\typj{\Gam}{!a(x).P:0}  \mbox{
        then } \os{\deriv} = \emptyset} 
    \and
   {\mbox{If } \deriv:\typj{\Gam}{a(x).P:w},
      \mbox{ then }  \os{\deriv} =
     \os{\deriv_1}, \mbox{ where }\deriv_1:\typj{\Gam,x:T}{P:w} }
    \and
    {\mbox{If } \deriv:\typj{\Gam}{(\new
        a)\,P:w},  \mbox{ then } 
      \os{\deriv} =\os{\deriv_1} \mbox{, where }
      \deriv_1:\typj{\Gam, a:T}{P:w}} 

    \and
    {\mbox{If } \deriv: \typj{\Gam}{P_1|P_2:\max{w_1,w_2}}, \mbox{
        then }  \os{\deriv} = \os{\deriv_1} \uplus \os{\deriv_2}
      \mbox{, where } \deriv_i:\typj{\Gam}{P_i}, \,i =1,2
    }
  \end{mathpar}
  Given \Gam{} and $P$, we define \osG{\Gam}{P}, the measure of $P$
  with respect to \Gam, as follows:
  $$
    \osG{\Gam}{P} = \min{\os{\deriv},\, \deriv:{\typj{\Gam}{P:w}}\mbox{
        ~~for some }w}
    \enspace.
  $$
\end{defi}

Note that in the case of output in the above definition, we refer to
\lvl{a}, which is the level of $a$ according to \Gam{} (that is,
without using subtyping).
We have that if \typj{\Gam}{P:w}, then $\forall k\in\osG{\Gam}{P}, k\leq w$.




\begin{lem}
  \label{lem:multiset:subst}
  Suppose \typj{\Gam}{P:w},  
$\Gam(x) = T$, $\Gam(v) = T'$ and $T' \leq T$. Then  
  \osG{\Gam}{P} $\geq_{mul}$ \osG{\Gam}{P[v/x]}. 
\end{lem}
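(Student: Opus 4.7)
The plan is to establish the inequality derivation by derivation. Given a typing derivation $\deriv$ witnessing $\osG{\Gam}{P}=\os{\deriv}$, I would exhibit a typing derivation $\deriv'$ of $\typj{\Gam}{P[v/x]:w'}$ with $\os{\deriv'}\leq_{mul}\os{\deriv}$; by minimality of $\osG{\Gam}{P[v/x]}$, this is enough to conclude. The existence of $\deriv'$ is already granted by the proof of Lemma~\ref{lem:subject:substitution}, which rests on Proposition~\ref{prop:narrowing}; the new content is the measure bound, which I would prove by induction on the structure of $P$, tracking how $\deriv'$ is built from $\deriv$.

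\textbf{The output case.} The key case is $P=\outm a b$, where $\os{\deriv}=\{\lvl{a}\}$. If $x\neq a$, the substitution leaves the channel (and its level) untouched, so $\os{\deriv'}=\{\lvl{a}\}$. If $x=a$, then $P[v/x]=\outm v b$ and $\os{\deriv'}=\{\lvl{v}\}$, and I must show $\lvl{v}\leq\lvl{a}$. Typability of $\outm a b$ forces, via subsumption, $T=\Gam(x)\leq\ochan{k}{U}$ for some $k,U$; inspection of the subtyping rules in Figure~\ref{fig:types:io} shows that $T$ is itself of the form $\chan{\lvl{x}}{\cdot}$ or $\ochan{\lvl{x}}{\cdot}$. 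Since $T'\leq T$, only the subtyping rules producing an output capability (together with reflexivity and transitivity) are compatible, and each such rule enforces $\lvl{T'}\leq\lvl{T}$; hence $\lvl{v}\leq\lvl{a}$, and $\{\lvl{v}\}\leq_{mul}\{\lvl{a}\}$.

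\textbf{Remaining cases.} For $P=\nil$ and $P=!a(y).P_1$ the measure is $\emptyset$ regardless of the channel names, and this is preserved under substitution (typability of $P[v/x]$ being supplied by Lemma~\ref{lem:subject:substitution}). For $P=a(y).P_1$ and $P=(\new a)\,P_1$, the measure coincides with that of the unique subderivation, and the induction hypothesis applied there---made applicable by the bound-name convention $y\neq x,v$---yields the bound. For $P=P_1\mid P_2$, the measure splits as $\os{\deriv_1}\uplus\os{\deriv_2}$, and the induction hypothesis on each factor, combined with the monotonicity of $\geq_{mul}$ under multiset union, closes the case.

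\textbf{Main obstacle.} The principal technical subtlety is extracting the inequality $\lvl{v}\leq\lvl{a}$ from the bare hypothesis $T'\leq T$ in the output case. This hinges on $x$ occurring in output position, so that $T$ must be output-capable, and on the fact that subtyping is covariant in the level along output capabilities but contravariant along input capabilities; since only the former is relevant here, substituting a smaller type can only decrease the level that enters the multiset. A secondary point is that $\osG{\Gam}{P}$ is defined as a minimum in a partial order, but since the inductive construction delivers a single concrete derivation of $P[v/x]$ whose measure is bounded by $\osG{\Gam}{P}$, this is sufficient.
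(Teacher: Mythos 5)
Your proof is correct and takes essentially the same route as the paper, whose entire proof reads ``Follows from Lemma~\ref{lem:subject:substitution}, and by definition of \osG{\Gam}{\cdot}'': you have simply made explicit the induction over the derivation that this appeal leaves implicit. The one point of substance you isolate --- that in the output case the level recorded in the multiset can only decrease, because an output-capable $T$ forces $T'\leq T$ to go through \textsc{Subt-OO}/\textsc{Subt-\#O}, where levels are covariant --- is indeed the reason the paper's one-liner goes through, and your treatment of it is accurate.
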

\begin{proof}
  Follows from Lemma \ref{lem:subject:substitution}, and by definition
  of \osG{\Gam}{\cdot}.
\end{proof}

\begin{lem}
\label{lem:congr:os}
If \typj{\Gam}{P:w} and $P \equiv Q$, then \typj{\Gam}{Q:w'} for some
$w'$ and \osG{\Gam} P $=$ \osG{\Gam}Q.
\end{lem}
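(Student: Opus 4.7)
\begin{proofsketch}
The typability part is already covered by Lemma~\ref{lem:subject:congruence}, so the core task is the measure equality $\mathcal{M}_\Gamma(P) = \mathcal{M}_\Gamma(Q)$. Since $\mathcal{M}_\Gamma(\cdot)$ is defined as the minimum (in the multiset order) over all typing derivations of the process, the plan is to establish a measure-preserving correspondence between derivations: I will show that for every derivation $\mathcal{D}:\typj{\Gam}{P:w}$ there exists a derivation $\mathcal{D}':\typj{\Gam}{Q:w'}$ with $\os{\mathcal{D}}=\os{\mathcal{D}'}$, and symmetrically. This yields equality of the two sets of derivation-measures, hence equality of their minima.

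The proof proceeds by induction on the derivation of $P\equiv Q$. The congruence, reflexivity, symmetry and transitivity cases are immediate from the induction hypothesis (for the congruence rules, one decomposes the typing derivation of the surrounding context and rebuilds it on top of the transformed subderivation, whose measure coincides with the original by induction; note that restriction and non-replicated input contribute nothing to the measure, and parallel composition contributes via multiset union, which is compatible with the transformation). For the axioms, I will check each one individually: associativity and commutativity of $|$ correspond exactly to associativity and commutativity of $\uplus$ on the multisets; the axiom $P|\nil\equiv P$ is handled using $\os{\deriv}=\emptyset$ for $\nil$ together with the fact that $\emptyset$ is the neutral element for $\uplus$; the axioms $(\new a)\nil\equiv\nil$ and $(\new a)(\new b)P\equiv(\new b)(\new a)P$ are immediate since restrictions are transparent for the measure.

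The only genuinely delicate case is the scope extrusion axiom $(\new a)(P|Q)\equiv P\,|\,(\new a)Q$ with $a\notin\fnames{P}$. On the left, a derivation ends with \textsc{Res-S} over \textsc{Par-L}, producing $\os{\deriv_P}\uplus\os{\deriv_Q}$ with $\deriv_P:\typj{\Gam,a:T}{P:w_1}$ and $\deriv_Q:\typj{\Gam,a:T}{Q:w_2}$. Using the Strengthening lemma (stated in the long version, and routine here since $a\notin\fnames{P}$), I obtain $\deriv'_P:\typj{\Gam}{P:w_1}$ with $\os{\deriv'_P}=\os{\deriv_P}$ (the derivations are structurally identical up to the absence of $a:T$ in the context, and the measure only reads off levels of output subjects, which are unaffected). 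Combining $\deriv'_P$ with $\deriv_Q$ via \textsc{Res-S} then \textsc{Par-L} gives a derivation of $P\,|\,(\new a)Q$ with the same multiset $\os{\deriv_P}\uplus\os{\deriv_Q}$. The converse direction uses Weakening symmetrically.

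The main obstacle is thus ensuring the scope extrusion case preserves measures exactly; however, because \os{\cdot} is insensitive both to the presence of restrictions in the derivation and to adding or removing a bound name from the context (the clauses for output only use $\lvl{a}$ of free output subjects, unaffected by Strengthening/Weakening on $a\notin\fnames{P}$), the equality $\os{\deriv}=\os{\deriv'}$ goes through without difficulty.
\end{proofsketch}
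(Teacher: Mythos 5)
Your proof is correct and follows essentially the same route as the paper's: induction on the derivation of $P\equiv Q$, with the parallel-composition axioms handled by associativity/commutativity/neutrality of $\uplus$, restrictions being transparent for the measure, and scope extrusion as the only delicate axiom, settled by observing that the measure is unaffected when $a\notin\fnames{P}$. Your additional care in lifting the argument to a measure-preserving correspondence between \emph{derivations} (so that the minima defining \osG{\Gam}{\cdot} coincide) is a point the paper's own proof glosses over, and is a welcome refinement rather than a different approach.
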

\iflong
\begin{proof}
  By induction on the derivation of $P \equiv Q$.
  \begin{itemize}
  \item Cases $P~|~Q \equiv Q~|~P$, $P~|~(Q~|~R) \equiv (P~|~Q)~|~R$ and  $P~|~0 \equiv P$ are straightforward as \osG{\Gam}P $\uplus$ \osG{\Gam}Q $=$ \osG{\Gam}Q $\uplus$ \osG{\Gam} P.
  \item $(\new a). (\new b). P \equiv (\new b). (\new a). P$. Using the same $\Gam$ allows us to assign the same type to $a$ and $b$ regardless the order in which they are added to the typing context.
  \item $(\new a). (P~|~Q) \equiv ((\new a.)P)~|~Q $, with $a \notin \fnames Q$. Since $a \notin \fnames Q$, then $lvl(a) \notin \os {\deriv_q}$, with $\deriv_q: \typj{\Gam}{Q:w_q}$, and therefore, \osG{\Gam}{(\new a). (P~|~Q)} $=$ \osG{\Gam}{((\new a.)P)~|~Q}. 
\end{itemize}
\end{proof}
\fi

We are now able to derive the essential property of \osG{\Gam}{\cdot}:
\begin{lem}
\label{lem:os}
If \typj{\Gam}{P:w} and $P\red P'$, then $\osG{\Gam}{P} >_{mul}\osG{\Gam}{P'}$.
\end{lem}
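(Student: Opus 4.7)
\begin{proofsketch}
The plan is to proceed by induction on the derivation of $P\red P'$. The inductive cases (closure under parallel composition and restriction) are routine: if $P = P_1\,|\,Q \red P_1'\,|\,Q = P'$ with $P_1\red P_1'$, then by Subject Reduction $P_1'$ is typable, and by the definition of the measure we have $\osG{\Gam}{P} = \osG{\Gam}{P_1}\uplus\osG{\Gam}{Q}$ and similarly for $P'$; the induction hypothesis $\osG{\Gam}{P_1}>_{mul}\osG{\Gam}{P_1'}$ combined with the compatibility of $>_{mul}$ with multiset union gives the result. The restriction case is analogous, as the measure is invariant under $\new$. For the structural congruence closure rule, we invoke Lemma~\ref{lem:congr:os} at both ends of the reduction step, which preserves the measure, and then apply the induction hypothesis.

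The two base cases require slightly more care. For synchronous communication $P = a(x).P_1\,|\,\outm{a}{v} \red P_1[v/x] = P'$, the measure of $P$ decomposes as $\osG{\Gam,x:T}{P_1}\uplus\{\lvl{a}\}$, where $T$ is the type assigned to $x$ by the typing derivation. By Lemma~\ref{lem:subject:substitution} the substitution $P_1[v/x]$ is typable in $\Gam$, and Lemma~\ref{lem:multiset:subst} yields $\osG{\Gam,x:T}{P_1} \geq_{mul} \osG{\Gam}{P_1[v/x]}$. Since removing an element from a multiset is a strict decrease in $>_{mul}$, we conclude $\osG{\Gam}{P} >_{mul} \osG{\Gam}{P'}$.

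The main obstacle is the replicated communication case $P = !a(x).P_1\,|\,\outm{a}{v} \red !a(x).P_1\,|\,P_1[v/x] = P'$, where the singleton $\{\lvl{a}\}$ contributed by the output must strictly dominate the full multiset $\osG{\Gam}{P_1[v/x]}$ produced by the newly unfolded copy of the body. Here we use the typing of $!a(x).P_1$, which forces $\lvl{a} > w$ for the weight $w$ of $P_1$ in the extended environment $\Gam,x:T$. By the observation stated just after Definition~\ref{def:measure}, every element of $\osG{\Gam,x:T}{P_1}$ is bounded by $w$, hence strictly bounded by $\lvl{a}$. Applying Lemma~\ref{lem:multiset:subst} again, every element of $\osG{\Gam}{P_1[v/x]}$ is also strictly below $\lvl{a}$, which is exactly what is needed to derive $\{\lvl{a}\} >_{mul} \osG{\Gam}{P_1[v/x]}$ and conclude.
\end{proofsketch}
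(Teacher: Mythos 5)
Your proof is correct and follows essentially the same route as the paper's: induction on the reduction derivation, Lemma~\ref{lem:multiset:subst} for the communication cases, and the constraint $\lvl{a}>w$ from rule \textsc{Rep-S} combined with the bound on elements of the measure to handle the replicated case. You are in fact slightly more careful than the paper in two places — you spell out the routine parallel/restriction closure cases, and you correctly invoke Lemma~\ref{lem:multiset:subst} (rather than Lemma~\ref{lem:congr:os}, as the paper's text does) to transfer the strict level bound to $P_1[v/x]$.
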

\iflong
\begin{proof}
We reason by induction on $P \red P'$. 
  \begin{itemize}
  \item $P \red P'$ with $P \equiv Q$, $Q \red Q'$ and $Q' \equiv P'$.  We have that \osG{\Gam}{Q} $>_{mul}$ \osG{\Gam}{Q'}. We conclude by using Lemma \ref{lem:congr:os}. 
    
  \item $P = a(x).P_1| \outm a v$ \red $P' = P_1[v/x]$. We have that
    \begin{mathpar}
      \osG{\Gam}P = \osG{\Gam}{P_1} \uplus \osG{\Gam}{lvl(a)} \and \osG{\Gam}{P'} = \osG{\Gam}{P_1[v/x]}
    \end{mathpar}
    According to Lemma \ref{lem:multiset:subst} \osG{\Gam}{P_1} $\geq_{mul}$ \osG{\Gam}{P_1[v/x]} if $T_v \leq T_x$. In order to see why this is true assume that $\Gam(a) = \chan {} T_a$, $\Gam(x) = T_x$ and  $\Gam(v) = T_v$. When trying to type $a(x)$ and $\outm a v$ we get that $T_v \leq T_a \leq T_x$.  
 Therefore we have that 
    \begin{mathpar}
      \osG{\Gam}{P_1} \uplus \osG{\Gam}{lvl(a)} >_{mul} \osG{\Gam}{P_1} \geq_{mul} \osG{\Gam}{P_1[v/x]}.
    \end{mathpar}

  \item $P = !a(x).P_1| \outm a v$ \red $P' = !a(x).P_1|P_1[v/x]$. 
    \begin{mathpar}
      \osG{\Gam}P =\emptyset \uplus \osG{\Gam}{lvl(a)} \and \osG{\Gam}{P'} = \emptyset \uplus \osG{\Gam}{P_1[v/x]}
    \end{mathpar}
    From $!a(x).P_1$, with \typj{\Gam}{P_1:w_1} we have that $lvl(a) > w_1$. As $w_1$ is the maximum level in $P_1$, the level of $a$ greater than that of any output in $P_1$. According to Lemma \ref{lem:congr:os}, $lvl(a)$ is also greater than any output in $P_1[v/x]$. 
  \end{itemize}
\end{proof}
\fi

\begin{thm}[Soundness]\label{thm:soundness:io}
  If \typj{\Gam}{P:w}, then $P$ terminates.
\end{thm}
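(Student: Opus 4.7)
The plan is to combine the subject reduction theorem (Theorem~\ref{thm:subject:reduction}) with the strict decrease of the multiset measure along reduction steps (Lemma~\ref{lem:os}), and then invoke the well-foundedness of $>_{mul}$. The argument is by contradiction: assume $P$ diverges, exhibit an infinite $>_{mul}$-decreasing chain of multisets, and derive a contradiction with well-foundedness.

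More precisely, suppose for contradiction that $P$ diverges. Then there exists an infinite sequence $(P_i)_{i \geq 0}$ with $P = P_0$ and $P_i \red P_{i+1}$ for every $i$. I would first argue, by induction on $i$, that each $P_i$ is typable under $\Gamma$: we have $\typj{\Gamma}{P_0 : w}$ by hypothesis, and if $\typj{\Gamma}{P_i : w_i}$ for some $w_i$, then Theorem~\ref{thm:subject:reduction} applied to $P_i \red P_{i+1}$ yields some $w_{i+1} \leq w_i$ with $\typj{\Gamma}{P_{i+1} : w_{i+1}}$. In particular $\osG{\Gamma}{P_i}$ is well-defined for every $i$.

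Next, applying Lemma~\ref{lem:os} at each step of the infinite sequence, I obtain
\[
\osG{\Gamma}{P_0} \;>_{mul}\; \osG{\Gamma}{P_1} \;>_{mul}\; \osG{\Gamma}{P_2} \;>_{mul}\; \cdots
\]
This is an infinite strictly decreasing chain in the multiset extension of the usual order on $\mathbb{N}$. Since $>_{mul}$ is well-founded (as noted in the paper just before Definition~\ref{def:measure}), no such chain can exist, giving the desired contradiction. Hence $P$ terminates.

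Since all the technical content (subject reduction, multiset decrease, well-foundedness) is already available, there is no real obstacle: the proof is a direct assembly of Theorem~\ref{thm:subject:reduction} and Lemma~\ref{lem:os}. The only small point to be careful about is that one must guarantee, before invoking Lemma~\ref{lem:os} at step $i$, that $P_i$ is typable under the \emph{same} environment $\Gamma$; this is exactly what the iterated use of subject reduction provides, and it is why $\osG{\Gamma}{P_i}$ makes sense throughout the chain.
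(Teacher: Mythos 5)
Your proof is correct and follows exactly the same route as the paper: assume divergence, use Theorem~\ref{thm:subject:reduction} to keep every $P_i$ typable under $\Gamma$, apply Lemma~\ref{lem:os} to obtain an infinite $>_{mul}$-decreasing chain, and contradict well-foundedness. Your version merely spells out the induction on $i$ that the paper leaves implicit.
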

\begin{proof}
  Suppose that $P$ diverges, i.e., there is an infinite
  sequence$(P_i)_{i \in N}$, where $P_i \red P_{i+1}$, $P = P_0$.
According to Theorem~\ref{thm:subject:reduction} every $P_i$ is typable. 
Using Lemma \ref{lem:os} we have \osG{\Gam}{P_{i}} $>_{mul}$
\osG{\Gam}{P_{i+1}}
for all $i$, which yields a contradiction.
\end{proof}

\begin{rk}[{\`a} la Curry vs {\`a} la Church]
  Our system is presented {\`a} la Curry. Existing systems for
  termination~\cite{deng:sangiorgi:termination:IC,demangeon:hirschkoff:sangiorgi:concur10}
  are {\`a} la Church, while the usual presentations of
  i/o-types~\cite{DBLP:journals/mscs/PierceS96} are {\`a} la Curry.
  The latter style of presentation is better suited to address type
  inference (see Section~\ref{sec:inference}). This has however some
  technical consequences in our proofs. Most
  importantly, the measure on processes (Definition~\ref{def:measure})
  would be simpler when working {\`a} la Church, because we could
  avoid to consider all possible derivations of a given judgement.
  We are not aware of Church-style presentations of i/o-types. 

\end{rk}


\section{Expressiveness of our Type System} 
\label{sec:expr}

For the purpose of the discussions in this section, we work in a
polyadic calculus. The extension of our type system to handle
polyadicity is rather standard, and brings no particular difficulty.

\subsection{A More Flexible Handling of Levels}

Our system is strictly more expressive than the original one by Deng
and Sangiorgi~\cite{deng:sangiorgi:termination:IC}, as expressed by
the two following observations (Lemma~\ref{lem:comparison:deng} and
Example~\ref{expl:level-polymorphism}):
\begin{lem}
  \label{lem:comparison:deng}
  Any process typable according to the first type system
  of~\cite{deng:sangiorgi:termination:IC} is typable in our 
  system.
\end{lem}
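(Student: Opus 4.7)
The plan is to define a type translation $\encod{\cdot}$ from Deng--Sangiorgi's types into ours, extend it pointwise to typing contexts, and then prove by induction on the derivation of $\typdeng{\Gam}{P:w}$ that the corresponding judgement $\typj{\encod{\Gam}}{P:w}$ is derivable in our system. Since Deng--Sangiorgi types have no capabilities, the natural choice is to send each channel type $\sharp^k T$ to the full-capability type $\chan{k}{\encod{T}}$, and $\Unit$ to $\Unit$; levels are left unchanged, and hence so are the weights assigned to processes.

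The induction is a routine case analysis on the last rule used. For nil, parallel composition, restriction, and non-replicated input, our system offers a direct counterpart to the Deng--Sangiorgi rule, and the conclusion follows from the induction hypothesis, preserving the weight. For outputs and (replicated) inputs, our rules \textsc{Out-S}, \textsc{Inp-S} and \textsc{Rep-S} require the appropriate capability on the subject channel, whereas the translated environment only grants the full-capability type $\chan{k}{\encod{T}}$. We bridge this gap by one application of subsumption combined with \textsc{Subt-\#O} (respectively \textsc{Subt-\#I}); both preserve the level $k$. When the output carries a value $v$, the Deng--Sangiorgi premise forces $v$ and the content of the subject to have the same simple type, so their translations agree exactly and no further coercion is needed on $v$. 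The side condition $k>w$ of \textsc{Rep-S} therefore transfers verbatim from the source derivation.

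I do not expect a real obstacle: the only point to verify is that passing from $\chan{k}{T}$ to $\ichan{k}{T}$ or $\ochan{k}{T}$ via subtyping does not secretly alter the level, which is immediate from the rules \textsc{Subt-\#I} and \textsc{Subt-\#O}. Consequently, every Deng--Sangiorgi derivation is mimicked step by step in our system, with identical underlying levels and weights, and the lemma follows. Strictness of the inclusion is witnessed separately by Example~\ref{expl:simple:ex}.
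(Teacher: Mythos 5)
Your proposal is correct and follows essentially the same route as the paper: recast the Deng--Sangiorgi types as full-capability \#-types, induct on the derivation, and use \textsc{Subt-\#I}/\textsc{Subt-\#O} via subsumption to supply the capability required by \textsc{Out-S}, \textsc{Inp-S} and \textsc{Rep-S}, with levels and weights preserved. Making the type translation explicit is a harmless elaboration of what the paper leaves implicit.
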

\begin{proof}
  The presentation of~\cite{deng:sangiorgi:termination:IC} differs
  slightly from ours. The first system presented in that paper can be
  recast in our setting by working with the \# capability only (thus
  disallowing subtyping), and requiring type \chan k T for $a$ in the
  first premise of the rules for output, finite input and replicated
  input. 
 We write \typdeng{\Gam}{P:w} for the resulting judgement.
We establish that \typdeng{\Gam}{P:w} implies \typj{\Gam}{P:w} by
induction over the derivation of \typdeng{\Gam}{P:w}.
\iflong
In the rules for
input and output, the appropriate capability is deduced for $a$ by a
simple usage of subtyping.
\fi
\end{proof}

We now present an example showing that the flexibility
brought by subtyping can be useful to ease programming. We view
replicated processes as servers, or functions.
Our example shows that it is possible in our system to invoke a server
by passing names having different levels, provided some form of
coherence (as expressed by the subtyping relation) is guaranteed. This
form of ``polymorphism on levels'' is not available in previous type
systems for termination in the $\pi$-calculus.


\begin{expl}[Level-polymorphism]
  \label{expl:level-polymorphism}
  Consider the following definitions (in addition to polyadicity, we
   accommodate the first-order type of natural numbers,
  with corresponding primitive operations):
  \begin{mathpar}
    \begin{array}{rcl}
      F_1 &=&
      !f_1(n,r).\outm{r}{n*n}\\
      F_2 &=&
    !f_2(m,r).(\new s)\,\big( \outm{f_1}{m+1,s}~|~ s(x).\outm{r}{x+1} \big)\\
      Q &=&
    !g(p,x,r).(\new s)\,\big( \outm{p}{x,s}~|~ s(y).\outm{p}{y,r} \big)
  \end{array}
  \end{mathpar}
$F_1$ is a server, running at $f_1$, that returns the square of a
integer on a continuation channel $r$ (which is its second argument).
$F_2$ is a server that computes similarly $(m+1)^2+1$, by making a
call to $F_1$ to compute $(m+1)^2$. Both $F_1$ and $F_2$ can be viewed
as implementations of functions of type \texttt{int -> int}.

$Q$ is a ``higher-order server'': its first argument $p$ is the
address of a server acting as a function of type \texttt{int -> int},
and $Q$ returns the result of calling twice the function located at
$p$ on its argument (process $Q$ thus somehow acts like Church numeral $2$).

Let us now examine how we can typecheck the process
\begin{mathpar}
  F_1~|~F_2~|~Q~|~    \outm{g}{f_1,4,t_1}~|~  \outm{g}{f_2,5,t_2}
  \enspace.
\end{mathpar}
$F_2$ contains a call to $f_1$ under a replicated input on $f_2$,
which forces $\lvl{f_2}>\lvl{f_1}$. In the type systems
of~\cite{deng:sangiorgi:termination:IC}, this prevents us from
typing the processes above, since $f_1$ and $f_2$ should have the same
type (and hence in particular the same level), both being used as
argument in the outputs on $g$.
We can type this process in our setting, thanks to subtyping, for
instance by assigning the following types:
$g:\ochan{k_g}{(\ochan{k_2}{T}, U,V) }, \, f_2:\chan{k_2}{T},\, f_1:
\chan{k_1}{T}$, with $k_1<k_2$.
\end{expl}

It can be shown that this example cannot be typed using any of the
systems of~\cite{deng:sangiorgi:termination:IC}.
It can however be phrased (and hence recognised as terminating) in the
``purely functional $\pi$-calculus''
of~\cite{demangeon:hirschkoff:sangiorgi:concur10}, that is, using a
semantics-based approach  --- see also
Section~\ref{sec:iofun}. It should however not be difficult to present
a variation on it that forces one to rely on levels-based type
systems.

\subsection{Encoding  the Simply-Typed $\lambda$-calculus} 

\newcommand{\clause}[1]{}

We now push further the investigation of the ability to analyse
terminating functional behaviour in the $\pi$-calculus using our type
system, and study an encoding of the $\lambda$-calculus in the
$\pi$-calculus. 

We focus on the following \emph{parallel call-by-value} encoding,
but we believe that the analogue of the results we present here also
holds for other encodings.
A $\lambda$-term $M$ is encoded as $\encod{M}_p$, where $p$ is a
name which acts as a parameter in the encoding. The encoding is
defined as follows:
%
\begin{mathpar}
  \clause{Lam}~~
  \encod{\lambda x.M}_p
  \eqdef
  (\new y)\,(!y(x,q).\encod{M}_q\,|\, \outm{p}{y})
  \and
  \clause{Var}~~
  \encod{x}_p
  \eqdef
  \outm p x
  \and
  \clause{App}~~
  \encod{M\,N}_p
  \eqdef
  (\new q,r)\,\big(\,
  \encod{M}_q~|~\encod{N}_r~|~
  q(f).r(z).\outm{f}{z,p}
  \,\big)
\end{mathpar}
We can make the following remarks:
\begin{itemize}
\item A simply-typed $\lambda$-term is encoded into a simply-typed
  process (see~\cite{SW01}). Typability for termination comes into
  play in the translation of $\lambda$-abstractions.
\item The target of this encoding is $L\pi$, the  \emph{localised
    $\pi$-calculus} in which only the output capability is
  transmitted (see also Section~\ref{section:inference_lpi}).
\end{itemize}


\cite{DBLP:conf/birthday/DemangeonHS09} provides a counterexample to
typability of this encoding for the first type system
of~\cite{deng:sangiorgi:termination:IC} (the proof of this result also
entails that typability according to the other, more expressive, type
systems due to Deng and Sangiorgi also fails to hold). 
Let us analyse this example:
\begin{expl}[From~\cite{DBLP:conf/birthday/DemangeonHS09}]
  The $\lambda$-term
  $
    M_1~~\eqdef~~ f ~ ( \lambda x. (f ~ u ~(u ~ v))
  $ 
can be typed in the simply typed $\lambda$-calculus, in a typing
context containing the hypotheses $f:(\sigma\rar\tau)\rar\tau\rar\tau, v:\sigma,
u:\sigma\rar\tau$.

Computing $\encod{M_1}_p$ 
yields the process:
\begin{mathpar}
  \begin{array}{l}
        (\new q, r)\\
        \quad(\new y)\,\big(~\outm r y\\
        \left.\begin{array}{ll}
        \quad\quad\quad
        |~~    !y(x, q').(\new q_1,r_1,q_2,r_2,q_3,r_3)\\
        \quad\quad\quad\quad
        \big(~~
    {\outm {q_2} f ~|~ \outm {r_2} u ~|~
      q_2(f_2).r_2(z_2).\outm{f_2}{z_2, q_1}}
    \quad~~~
    &{\encod{f~u}_{q_1}}
    \quad~
  \\[.2em]
        \quad\quad\quad\quad
  ~|~~
    {
      \outm {q_3} u ~|~ \outm {r_3} v ~|~
      q_3(f_3).r_3(z_3).\outm{f_3}{z_3,r_1}}
    & {\encod{u~v}_{r_1}}
  \\[.2em]
        \quad\quad\quad\quad
  ~|~~
   q_1(f_1).r_1(z_1).\outm{f_1}{z_1, q'} ~~\big)~\big)
 \end{array}\right]
~\encod{\lambda x.\,(f~ u~(u~v))}_r
 \\[.2em]
 \quad
 ~|~     \outm q f~|~ q(f').r(z).\outm{f'}{z, p}
\end{array}
\end{mathpar}
%

If we try and type this term using the first type system
of~\cite{deng:sangiorgi:termination:IC}, we can reason as follows:
\begin{enumerate}
\item\label{obs:one} By looking at the line corresponding to $\encod{f~u}_{q_1}$, we
  deduce that the types of $f$ and $f_2$ are unified, and similarly
  for $z_2$ and $u$.
\item\label{obs:two} Similarly, the next line ($\encod{u~v}_{r_1}$) implies that the
  types of $f_3$ and $u$ are unified.
\item\label{obs:three} The last line above entails that the types
  assigned to $f$ and $f'$ must be unified, and the same for the types
  of $z$ and $y$ (because of the output \outm r y).
  \end{enumerate}

  If we write \chan{k}{\msg{T_1,T_2}} for the type (simple) assigned to $f$,
  we have by remark~\ref{obs:one} that $u$ has type $T_1$, and the
  same holds for $y$ by remark~\ref{obs:three}.
  In order to typecheck the replicated term, we must have $\lvl y>\lvl
  {f_3} = \lvl u$ by remark~\ref{obs:two}, which is impossible since
  $y$ and $u$ have the same type.


  While $\encod{M_1}_p$ cannot be typed using the approach
  of~\cite{deng:sangiorgi:termination:IC}, it can be using the system of
  Section~\ref{section:io}. Indeed, in that setting $y$ and $u$ need
  not have the same levels, so that we can satisfy the constraint
  $\lvl y>\lvl u$. The last line above generates an output
  \outm{f}{y,p}, which can be typed directly, without use of
  subtyping. To typecheck the output \outm{f}{u,q_1}, we ``promote''
  the level of $u$ to the level of $y$ thanks to subtyping, which is
  possible because only the output capability on $u$ is transmitted
  along $f$.
%


  
\end{expl}

It however appears that our system is not able to typecheck the image
of \STlam, as the following (new) counterexample shows:
%
%
\begin{expl}\label{expl:new:lambda}
  We first look at the following rather simple $\pi$-calculus process:
$$
(\new u)\,\big(\,
!u(x).\out x ~|~ (\new v)\,(!v.\outm u t ~|~ \outm u v)
\,\big)
\enspace.
$$
%
This process is not typable in our type system, although it
terminates. Indeed, we can assign a type of the form
\chan{k}{\ochan{n}{\Unit}} to $u$, and \chan{m}{\Unit} to
$v$. Type-checking the subterm $!v.\outm u t$ imposes $k<m$, and
type-checking $!u(x).\out x$ imposes $k>n$. Finally, type-checking
\outm u v gives $m\leq n$, which leads to an
inconsistency.


%
We can somehow `expand' this process into the encoding of a
$\lambda$-term: consider indeed 
$$
M_2~~\eqdef~~
\big(\,
\lambda u.\,((\lambda v. (u~v))~(\lambda y. (u~t)))
\,\big)
~~(\lambda x.\,(x~a))
\enspace.
$$
%
We do not present the (rather complex) process corresponding to
$\encod{M_2}_p$. We instead remark that there is a sequence of
reductions starting from $\encod{M_2}_p$ and leading to
  $$
  ! y_1(u, q_1).\big(~ ! y_3(v, q_4). \outm {{ u}} {{ v}, q_4} ~|~ !y_5 (y, q_5). \outm {{ u}}{{ t}, q_5}~|~ 
  \outm {y_3}{y_5, q_1} ~\big) ~~|~~
  ! y_2(x, q_2). \outm {{ x}} {a, q_2}  ~~|~~ \outm {y_1}{y_2,p}
  \enspace.
  $$
  These first reduction steps correspond to `administrative
  reductions' (which have no counterpart in the original
  $\lambda$-calculus term). We can now perform some communications
  that correspond to $\beta$-reductions,
and obtain a process which contains a subterm of the form
  $$
    \outm{{\bf u}} {{\bf v},p}~|~ {\bf !v}(y, q_5). { \outm {{\bf u}}
      {{\bf t},q_5}} ~|~ {\bf !u(x},q_2{\bf)}.{\bf \out x} 
    \msg{a, q_2}
    \enspace.
  $$
Some channel names appear in boldface  in
order to stress the similarity with the process seen above: for the
same reasons, this term cannot be typed. 
By subject reduction (Theorem~\ref{thm:subject:reduction}), a typable
term can only reduce to a typable term. This allows us to conclude
that $\encod{M_2}_p$ is not typable in our system.

\end{expl}

\subsection{Subtyping and Functional Names}
\label{sec:iofun}
\newcommand{\curt}[3]{\ensuremath{#1\,\vdash\,#2:#3}}

\newcommand{\separator}{\ensuremath{\,\bullet\,}}
\newcommand{\compositenv}[2]{\ensuremath{#1 \separator #2}}

\newcommand{\typf}[3]{\ensuremath{\compositenv{#1}{#2}\,\vdash\, #3}}
\renewcommand{\typ}[2]{\ensuremath{#1\,\vdash\,#2}}

In order to handle functional computation as expressed by \STlam, we
extend the system of Section~\ref{section:io} along the lines
of~\cite{demangeon:hirschkoff:sangiorgi:concur10}.
The idea is to classify names into \emph{functional} and
\emph{imperative} names. Intuitively, functional names arise through
the encoding of \STlam{}. For termination, these are dealt with using
an appropriate method --- the `semantics-based' approaches discussed
in Section~\ref{sec:intro}, and introduced
in~\cite{yoshida:berger:honda:termination:ic,sangiorgi:mscs:termination}. For
imperative names, we resort to (an adaptation of) the rules of
Section~\ref{section:io}.


Our type system is {\`a} la Curry, and the kind of a name, functional
or imperative, is fixed along the construction of a typing derivation.
Typing environments are of the form 
\compositenv{\Gam}{f:\ochan k T} --- the intuition is that we
{isolate} a particular name, $f$. $f$ 
is the name which can be used to build replicated inputs where $f$ is
treated as a functional name.  The typing rules are given on
Figure~\ref{fig:typing:iofun}.  There are two rules to typecheck a
restricted process, according to whether we want to treat the
restricted name as functional (in which case the isolated name
changes) or imperative (in which case the typing hypothesis is added
to the \Gam\ part of the typing environment).

\begin{figure}[t]
  \begin{mathpar}
  \inferrule*{\typf{\Gam, x:T}{-}{P:w}
    \and k\geq w
  }{
    \typf{\Gam}{f:\ochan k T}{!f(x).P:0}
  }
  \and
  \inferrule*{\typ{\Gam, f:\ochan k T}{a:\ochan n U}
    \and
    \typ{\Gam, f:\ochan k T}{v:U}
  }{
    \typf{\Gam}{f:\ochan k T}{\outm{a}{v}:n}
  }
  \and
  \inferrule*{
    \typ{\Gam}{c:\ichan n T}\and
    \typf{\Gam, x:T, f:\ochan k U}{-}{P:w}\and
    n>w
  }{
    \typf{\Gam}{f:\ochan k U}{c(x).P:0}
  }
  \and
  \inferrule*{
    \typ{\Gam}{c:\ichan{n} T}\and
    \typf{\Gam, x:T, f:\ochan k U}{-}{P:w}\and
    n>w
  }{
    \typf{\Gam}{f:\ochan k U}{!c(x).P:0}
  }
  \and
  \inferrule*{
    \typf{\Gam}{f:\ochan k T}{P_1}
    \and
    \typf{\Gam}{f:\ochan k T}{P_2}
  }{
    \typf{\Gam}{f:\ochan k T}{P_1|P_2}
    }
  \and
  \inferrule*{\typf{\Gam, g:\ochan{k}{T}}{f:\ochan{n}{U}}{P:w}
  }{
    \typf{\Gam}{g:\ochan k T}{(\new f)\,P:w}
  }
  \and
  \inferrule*{
    \typf{\Gam, c:\chan n T}{f:\ochan k U}{P:w}
  }{
    \typf{\Gam}{f:\ochan k U}{(\new c)\,P:w}
  }
\end{mathpar}
  \caption{Typing Rules for an Impure Calculus}
\label{fig:typing:iofun}
\end{figure}

\medskip

The typing rules of Figure~\ref{fig:typing:iofun} rely on
i/o-capabilities and the isolated name to enforce the usage of
functional names as expressed in~\cite{sangiorgi:mscs:termination}.
In~\cite{demangeon:hirschkoff:sangiorgi:concur10}, a specific
syntactical construct 
is instead used: we manipulate processes of the form
$\mathtt{def~}f\mathtt{~=~}(x)P_1 \mathtt{~in~}P_2$ (that can be read
as $(\new f)\,(!f(x).P_1\,|\,P_2)$), where $f$ does not occur in $P_1$
and occurs in output position only in $P_2$.

Let us analyse how our system imposes these constraints.  In the rule
for restriction on a functional name, 
the name $g$, that occurs in `isolated position' in the conclusion of
the rule, is added in the `non isolated' part of the typing
environment in the premise, with a type allowing one to use it in
output only.

In the rules for input on an imperative name (replicated or not), the
typing environment is of the form \compositenv{\Gam}{-} in the premise
where we typecheck the continuation process: this has to be understood
as \compositenv{\Gam}{d:\ochan k T}, for some dummy name $d$ that is
not used in the process being typed. We write `$-$' to stress the fact
that we disallow the construction of replicated inputs on functional
names. The functional name $f$ appears in the aforementioned premise in
the `non isolated' part of the typing environment, with only the
output rights on it.
Forbidding the creation of replicated inputs on functional names under
input prefixes is necessary because of diverging terms like the
following ($c$ is imperative, $f$ is functional):
$$
c(x).!f(y).\outm x y~~|~ \outm c f~|~ \outm f v
\enspace.
$$

Note also that typing non replicated inputs (on imperative names)
involves the same constraints as for replicated inputs, like
in~\cite{demangeon:hirschkoff:sangiorgi:concur10}: the relaxed control
over functional names requires indeed to be more restrictive on all
usages of imperative names.

The notation \compositenv{\Gam}{-} is also used in the rule to type a
replicated input on a functional name, and we can notice that in this
case $f$ cannot be used at all in the premise, to avoid
recursion.

\medskip


In addition to the gain in expressiveness brought by subtyping, we can
make the following remark:
\begin{rk}[Expressiveness]
  As in~\cite{demangeon:hirschkoff:sangiorgi:concur10}, our system
  allows one to typecheck the encoding of a \STlam{} term, by treating
  all names as functional, and assigning them level $0$.

  Moreover, our type system makes it possible to typecheck processes
  where several replicated inputs on the same functional name coexist,
  provided they occur `at the same level' in the term. For instance, a
  term of the form $(\new f)\,(!f(x).P~|~!f(y).Q~|~R)$ can be
  well-typed with $f$ acting as a functional name. This is not
  possible using the \texttt{def} construct
  of~\cite{demangeon:hirschkoff:sangiorgi:concur10}.
\end{rk}
Another form of expressiveness brought by our system is given by
typability of the following process: $!u(x).\out x~|~ !v.\outm u t~|~
\outm u v~|~ c(y).\outm u c$. Here, name $c$ must be imperative while
name $v$ must be functional, and both are emitted on $u$. This is
impossible in~\cite{demangeon:hirschkoff:sangiorgi:concur10}, where
every channel carries either a functional or an imperative name. In
our setting, only the output capability on $c$ is transmitted along
$u$, so in a sense $c$ is transmitted `as a functional name'.

Because of the particular handling of restrictions on functional
names, the analogue of Lemma~\ref{lem:subject:congruence} does not
hold for this type system: typability is not preserved by structural
congruence. Accordingly, the subject reduction property is stated in
the following way:
\begin{thm}[Subject reduction]
  If \typf{\Gam}{f:\ochan k T}{P:w} and $P\red P'$, then there exist
  $Q$ and $w'\leq w$ s.t.\ $P'\equiv Q$ and \typf{\Gam}{f:\ochan k
    T}{Q:w'}.
\end{thm}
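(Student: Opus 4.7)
The plan is to proceed by induction on the derivation of $P\red P'$, largely mirroring the proof of Theorem~\ref{thm:subject:reduction} but being careful about the composite structure of typing environments $\compositenv{\Gam}{f:\ochan k T}$ and the fact that structural congruence no longer preserves typability.

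First I would establish two technical ingredients adapted to the impure judgement. A \emph{narrowing} lemma analogous to Proposition~\ref{prop:narrowing} on the $\Gam$ part of the composite environment: replacing the type of a name by a smaller one preserves typability, with a possibly smaller weight. From this, a \emph{substitution} lemma in the spirit of Lemma~\ref{lem:subject:substitution}: if $\typf{\Gam,x:T}{g:\ochan k U}{P:w}$ and $v$ has some type $T'\leq T$ in $\compositenv{\Gam}{g:\ochan k U}$, then $\typf{\Gam}{g:\ochan k U}{P[v/x]:w'}$ for some $w'\leq w$. A small amount of care is required because the isolated slot must be preserved across the substitution.

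The base communication cases split according to the nature of the receiving name. On an imperative name $c$, the reduction $c(x).P_1\mid\outm{c}{v}\red P_1[v/x]$ (and its replicated variant) is handled exactly as in Theorem~\ref{thm:subject:reduction}: input and output typings force the receiver's type to sit between $\ichan{n}{T}$ and $\ochan{n}{U}$ with $U\leq T$, so the substitution lemma applies. On a functional name, $!f(x).P_1\mid\outm{f}{v}\red\,!f(x).P_1\mid P_1[v/x]$. Here $P_1$ was typed under the dummy isolation $\compositenv{\Gam,x:T}{-}$, which already forbids any new replicated input on a functional name; after substitution, $P_1[v/x]$ remains typable under $\compositenv{\Gam}{-}$, and a weakening argument lets me swap the dummy slot for $f:\ochan k T$ since $f$ is only used in output positions inside $P_1[v/x]$, as guaranteed by the shape of the rule for $!f(x).P_1$.

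The inductive cases for parallel composition and restriction are routine appeals to the induction hypothesis, after observing that the two restriction rules cover both possibilities for the restricted name (functional or imperative). The main obstacle is the structural-congruence closure rule for reduction. Because the analogue of Lemma~\ref{lem:subject:congruence} fails here, scope-extrusion across a $\new f$ binding a functional name may destroy typability, and this is precisely why the statement only requires $P'\equiv Q$ with $Q$ typable. To handle this case, I would perform the induction on the derivation of one-step reduction, so that whenever the congruence rule $Q\equiv P, P\red P', P'\equiv Q'$ is invoked I can appeal to the inductive typable reduct for $P\red P'$ and transport the conclusion through $\equiv$, absorbing any rearrangement (in particular, any re-extrusion of functional restrictions) into the existentially quantified $Q$. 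The hard part is therefore not a single calculation but the bookkeeping that shows every congruence-induced mismatch between the reduct and a typable process is always bridged by a further application of $\equiv$.
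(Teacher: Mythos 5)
Your overall architecture is reasonable: narrowing and substitution lemmas for the composite environment, a case split on whether the communication subject is functional or imperative, and the promotion of $f$ from the non-isolated part back into the isolated slot after substituting into the continuation. The paper omits its own proof of this theorem, so there is nothing to compare line by line, but the communication and evaluation-context cases as you sketch them would go through (including the functional case, where the premise $\typf{\Gam,x:T}{-}{P_1:w}$ guarantees $f$ does not occur in $P_1$ at all, so re-installing $f:\ochan k T$ in the isolated slot is pure weakening).

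The gap is in the structural-congruence case, and it is not just bookkeeping. In the rule $Q\equiv P$, $P\red P'$, $P'\equiv Q'$, the induction hypothesis for the sub-derivation $P\red P'$ requires $P$ to be typable; what the theorem hands you is typability of $Q$. Since, as you note yourself, typability is \emph{not} preserved by $\equiv$ in this system (scope extrusion and reassociation across a $(\new f)$ with $f$ functional change which name sits in the isolated slot when the subterms are typed), you cannot conclude that $P$ is typable, and the induction hypothesis is simply inapplicable. ``Transporting through $\equiv$'' only helps on the target side $P'\equiv Q'$, which the theorem's existentially quantified $Q$ already absorbs; the obstruction is on the source side. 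To close this case you need an extra ingredient: for instance a harmony/standardisation lemma showing that any reduction out of a typable process can be re-derived so that the congruence rule is applied only to the result, i.e.\ that a typable $Q$ with $Q\red Q'$ can be brought to a redex shape $\news{c}\,(a(x).R \mid \outm{a}{v}\mid S)$ using only those $\equiv$-axioms that do preserve typability (commutativity, associativity, extrusion of \emph{imperative} restrictions); alternatively, one can induct on the typing derivation of $Q$ rather than on the derivation of $Q\red Q'$. Without one of these the congruence case does not close, and that case is precisely where this theorem differs from Theorem~\ref{thm:subject:reduction}.
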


\begin{thm}[Soundness]
  If \typf{\Gam}{f:\ochan k T}{P:w}, then $P$ terminates.
\end{thm}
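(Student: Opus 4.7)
The overall plan follows the template of Theorem~\ref{thm:soundness:io}: assume a diverging reduction sequence $P = P_0 \red P_1 \red \dots$, and derive a contradiction by exhibiting a well-founded measure that strictly decreases at each step. Subject reduction (in its up-to-$\equiv$ form) together with Lemma~\ref{lem:subject:congruence}'s analogue for imperative restrictions guarantees that typability is preserved along the whole sequence, so the measure is defined for every $P_i$.

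The measure itself cannot be the plain multiset $\os{\deriv}$ from Definition~\ref{def:measure}, because for functional names there is no level stratification to rely on: a process of the form $!f(x).P$ with $f$ functional only needs $k \ge w$ (not $k > w$), so the multiset argument breaks. The key step is therefore to combine the two ingredients of~\cite{demangeon:hirschkoff:sangiorgi:concur10}: a level-based multiset counting outputs on \emph{imperative} names, and a semantics-based (reducibility/logical relation) argument for \emph{functional} names, taken from~\cite{sangiorgi:mscs:termination,yoshida:berger:honda:termination:ic}. Concretely, I would (i) assign to each typing derivation a multiset $\os{\deriv}$ that only records the levels of outputs on imperative channels (i.e.\ outputs typed in the \Gam-part of the environment), and (ii) prove a reducibility lemma stating that whenever \typf{\Gam}{f:\ochan k T}{P:w} holds, every functional-only reduction out of $P$ reaches a process still satisfying the same judgement, and such sequences of purely functional reductions are finite.

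The plan is then to do a case analysis on the reduction $P_i \red P_{i+1}$. If the communication happens on an imperative channel, the classical argument of Lemma~\ref{lem:os} applies: subtyping together with the narrowing/substitution Lemmas~\ref{prop:narrowing} and~\ref{lem:subject:substitution} ensure that substituting a value of smaller type into the continuation can only decrease the imperative multiset, and the consumed output $\outm a v$ contributes exactly one element $\lvl a$ to the multiset, so $\osG{\Gam}{P_i} >_{\mathrm{mul}} \osG{\Gam}{P_{i+1}}$. If the communication happens on a functional channel, the imperative multiset does not increase (by the typing discipline of the functional part, which forbids reintroducing imperative outputs from within functional code), and we invoke the reducibility argument on the functional subterm to show that only finitely many consecutive functional reductions are possible before an imperative reduction must occur. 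Taking the lexicographic product of the imperative multiset order with the reducibility measure yields a well-founded order that strictly decreases at every step, contradicting divergence.

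The main obstacle will be step (ii): adapting the reducibility-candidates technique to a Curry-style presentation with subtyping. In~\cite{demangeon:hirschkoff:sangiorgi:concur10} functional names are introduced by a \texttt{def} construct which syntactically marks the scope of the logical relation, whereas here that information is only implicit in the isolated position of the typing environment. One must therefore verify that all typing derivations of a given process agree enough on which names are treated functionally, so that the reducibility predicate is well-defined independently of the derivation, and that narrowing through subtyping does not turn a functional input into an imperative one (or vice versa) in a way that breaks the invariant. Once this coherence is established, the combination with the imperative multiset measure is routine.
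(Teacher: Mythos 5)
Your plan coincides with the paper's own (sketched) proof: both decompose the argument into the level-based multiset measure for imperative names, adapted from the proof of Theorem~\ref{thm:soundness:io}, combined with the semantics-based termination argument for functional names, following the structure of~\cite{demangeon:hirschkoff:sangiorgi:concur10}. The only notable difference is that the paper explicitly invokes termination of the purely functional fragment as a black box (``without looking into it'') rather than re-developing the reducibility argument, which sidesteps the difficulty you raise about adapting reducibility candidates to the Curry-style presentation with subtyping.
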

\begin{proofsketch}
  The proof has the same structure as the corresponding proof
  in~\cite{demangeon:hirschkoff:sangiorgi:concur10}. An important
  aspect of that proof is that we exploit the termination property for
  the calculus where all names are functional without looking into
  it. To handle the imperative part, we must adapt the proof along the
  lines of the termination argument for
  Theorem~\ref{thm:soundness:io}.
\end{proofsketch}


\section{Type Inference}
\label{sec:inference}
We now study type inference, that is, given a process $P$, the
existence of \Gam{}, $w$ such that \typj{\Gam}{P:w}. There might a
priori be several such \Gam{} (and several $w$: see
Lemma~\ref{lem:bigger_weight}). Type inference for level-based systems
has been studied in~\cite{DBLP:conf/tgc/DemangeonHKS07}, in absence of
i/o-types. 
We first present a type inference procedure in a special case of our
type system, and then discuss this question in the general case.


\subsection{Type Inference for Termination in the Localised
  $\pi$-calculus} 
\label{section:inference_lpi}

In this section, we 
\iflong
study a subsystem of the one we have defined in
Section~\ref{section:io}. We 
\fi
concentrate on the \emph{localised
  $\pi$-calculus}, \Lpi, which is defined by imposing that channels
transmit only the output capability on names: a process like
$a(x).x(y).\nil$ does not belong to \Lpi, as it makes use of the input
capability on $x$.
From the point of view of implementations, the restriction to \Lpi{}
makes sense. For instance, the 
language JoCaml~\cite{jocaml} implements a variant of the
$\pi$-calculus that follows this approach: one can only use a received
name in output.
Similarly, the communication primitives in
Erlang~\cite{erlang:website} can also be viewed as obeying to the
discipline of \Lpi: asynchronous messages can be sent to a PiD
(process id), and one cannot create dynamically a receiving agent at
that PiD: the code for the receiver starts running as soon as the PiD
is allocated. 
\iflong
Additionally, in cryptographic implementations of the $\pi$-calculus,
disallowing the sending of the read capability fixes the problem of
preservation of forward secrecy~\cite{DBLP:conf/ecoopw/Abadi99}
\fi
\iflong
\medskip
\fi

Technically, \Lpi{} is introduced by allowing the transmission of
o-types only. We write \typlpi{\Gam}{P:w} if \typj{\Gam}{P:w} can be
derived in such a way that in the derivation, whenever a type of the
form $\eta^k\eta'^{k'}T$ occurs, we have $\eta'=\otypename$ (types of
the form \ichan{k}{T} and \chan{k}{T} appear only when typechecking
input prefixes and restrictions).
Obviously, typability for \typlpi{}{} entails typability for
\typj{}{}, hence termination.
It can also be remarked that in restricting to \Lpi{}, we keep
an important aspect of the flexibility
brought by our system. In particular, the examples we have discussed
in Section~\ref{sec:expr} --- Example~\ref{expl:level-polymorphism}, and
the encoding of the $\lambda$-calculus --- belong to \Lpi.

\iflong
\daniel{this brings redundancies with the shorter paragraph given
  above}

Technically, we introduce \Lpi{} by restricting to a \emph{subset} of
the i/o-types only, while keeping the same typing rules. We adopt the
following restricted syntax for types (that we call `$S$-types'):
\begin{mathpar}
  S\defgram \ochan k S\OR \Unit
  \enspace,
\end{mathpar}
\noindent while typing environments contain hypotheses of the form
$a:S$ or $a:\chan{k}{S}$ only. Hypotheses of the latter form can only
be added using the typing rule for restriction. This entails that in
typing derivations, we can only use types of the form \chan k S (in
the rule for restriction), \ichan k S (in the rules for input
prefixes, thanks to subtyping), and $S$.  We write \typlpi{\Gam}{P:w}
for the typing judgement for \Lpi{} processes.

In order to stick to $S$-types, subtyping can be used only to exploit
flexibility on levels, but not, e.g., to deduce
\typj{\Gam}{a:\ochan{k}{\chan{n}{S}}} from
\typj{\Gam}{a:\ochan{k}{\ochan{n}{S}}}, as this would mean
manipulating a type that is not an $S$-type.
%
%
%
It can be remarked that in restricting to \Lpi{}, we keep
an important aspect of the flexibility
brought by our system. In particular, the examples we have discussed
in Section~\ref{sec:expr} --- Example~\ref{expl:level-polymorphism}, and
the encoding of the $\lambda$-calculus --- belong to \Lpi. 
Obviously, typability for \typlpi{}{} entails typability for
\typj{}{}, hence termination.
\fi

\medskip

We now describe a type inference procedure for \typlpi{}{}.
For lack of space, we do not provide
all details and proofs.


We first check typability when levels are not taken into
account. For this, we
rely on a type inference algorithm for simple
types~\cite{DBLP:conf/concur/VasconcelosH93}, together with a simple
syntactical check to verify that no received name is used in
input. When this first step succeeds, we
replace \chan{}T types with \ochan{}T types appropriately in the
outcome of the procedure for simple types (a type variable may be
assigned to some names, as, e.g., to name $x$ in process $a(x).\outm b
x$).
%
%


What remains to be done is to find out whether types can be decorated
with levels in order to ensure termination.
%
\iflong
We first introduce an auxiliary typing judgement for
processes, noted \typinf{\Gam}{P:w}. The rules
for \entailinf{} are the same as for $\vdash$, except for the rules
involving prefixes, which are the following:
\begin{mathpar}
  \inferrule*{
    \Gam(a) = \chan k S\and \typinf{\Gam,x:S}{P:w}
  }{
    \typinf{\Gam}{a(x).P:w}
  }
  \and
  \inferrule*{
    \Gam(a) = \chan k S\and \typinf{\Gam,x:S}{P:w}
    \and k>w
  }{
    \typinf{\Gam}{!a(x).P:0}
  }
  \and
  \inferrule*{
    \Gam(a) = \chan k S\mbox{ or }\Gam(a) = \ochan k S\and \typj{\Gam}{v:S}
  }{
    \typinf{\Gam}{\outm a v:k}
  }
  \and
\end{mathpar}
\noindent (the definition of the typing judgement \typj{\Gam}{a:T} is
left unchanged). Notice that in these rules, we \emph{read} the type
of the subject ($a$) in the typing context, thus disallowing the use
of the subsumption rule.

\begin{lem}\label{lem:equiv:typinf}
  For any \Gam, $P$ and $w$, \typlpi{\Gam}{P:w} ~iff~
  \typinf{\Gam}{P:w}.


\end{lem}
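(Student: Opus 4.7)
The plan is to establish both directions by induction on the structure of the typing derivations, the main point being to reconcile two different ways of reading the type of a prefix subject: in $\vdash^{L\pi}$, the subject type is obtained through an arbitrary chain of subsumptions, while in $\entailinf$ it is read directly off $\Gamma$.

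For the backward direction ($\typinf{\Gam}{P:w}$ implies $\typlpi{\Gam}{P:w}$), I would proceed by a straightforward induction on the derivation of $\typinf{\Gam}{P:w}$. The only cases that are not already rules of $\vdash^{L\pi}$ are the prefix rules. For an output $\outm{a}{v}$ with $\Gam(a)=\chan k S$, I would insert one use of subsumption via \textsc{Subt-\#O} to derive $\typj{\Gam}{a:\ochan k S}$, then apply \textsc{Out-S}; the case $\Gam(a)=\ochan k S$ is immediate. For inputs and replicated inputs with $\Gam(a) = \chan k S$, one use of \textsc{Subt-\#I} gives $\typj{\Gam}{a:\ichan k S}$ before applying \textsc{Inp-S} or \textsc{Rep-S}; the weight and the side-condition $k>w$ are unaffected.

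The forward direction ($\typlpi{\Gam}{P:w}$ implies $\typinf{\Gam}{P:w}$) is the delicate one, and the induction has to open up the uses of subsumption applied to prefix subjects. In \Lpi{} the only shapes of types attached to a name in $\Gam$ are $\chan k S$ (introduced by restriction) and $\ochan k S$ (coming from payload positions of a channel type), so when \textsc{Inp-S} or \textsc{Rep-S} is applied the subject $a$ must satisfy $\Gam(a)=\chan{k_0}{S_0}$, and subsumption has yielded $\typj{\Gam}{a:\ichan k T}$ with $k\le k_0$ and $T\le S_0$. I would then invoke the induction hypothesis on the continuation $P$ typed in $\Gam, x:T$ and use Proposition~\ref{prop:narrowing} to lower $x$ from $T$ to $S_0$, obtaining a $\entailinf$-typing of $P$ in $\Gam, x:S_0$ with a weight $w'\le w$, after which the $\entailinf$ rule for (replicated) input applies, using $k_0\ge k > w'$. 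The output case is symmetric: if \textsc{Out-S} was applied via $\chan{k_0}{S_0}\le \ochan{k}{T}$, i.e.\ $k_0\le k$, then reading the context directly produces the (possibly smaller) weight $k_0$.

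The main obstacle is this last point, namely that forbidding subsumption on the subject can make the weight \emph{shrink}. The clean way to handle this is to interpret the statement as matching the minimum derivable weights and then appeal to Lemma~\ref{lem:bigger_weight} to climb back to the original $w$, verifying that each use of $\vdash^{L\pi}$ at a shrunk weight suffices for the side-conditions in the outer derivation (which only require $k>w$, a condition preserved under decreasing $w$). Bookkeeping the interaction of narrowing on payload types with the inductive call on the continuation is the one place where care is needed; everywhere else the two systems are in immediate correspondence.
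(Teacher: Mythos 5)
Your proposal follows essentially the same route as the paper's own (very terse) proof: the backward direction inserts a single subsumption step (\textsc{Subt-\#I} or \textsc{Subt-\#O}) on the prefix subject, and the forward direction eliminates subsumption on subjects by exploiting that in \Lpi{} only \otypename-types are transmitted, repairing the continuation's context with Proposition~\ref{prop:narrowing}. You are in fact more careful than the paper, which does not mention the weight-shrinkage phenomenon at all; your analysis that reading the level off \Gam{} can only decrease the weight of an output (and that the guards $k>w$ survive this decrease) is correct. The one step that does not work as written is the appeal to Lemma~\ref{lem:bigger_weight} to climb back to the original $w$: that lemma's proof raises the weight of an output precisely by applying subsumption to the output subject, which is exactly what \entailinf{} forbids, so the weight of $\outm a v$ under \entailinf{} is pinned to the level recorded in \Gam{} and cannot be inflated. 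Consequently the literal exact-$w$ biconditional should be weakened to ``$\typlpi{\Gam}{P:w}$ implies $\typinf{\Gam}{P:w'}$ for some $w'\leq w$, and conversely $\typinf{\Gam}{P:w}$ implies $\typlpi{\Gam}{P:w}$'' (equivalently, equi-typability under \Gam), which is what your induction actually delivers and is all the inference procedure of Section~\ref{section:inference_lpi} requires.
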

\fi
\iflong
\begin{proof}
  We look only at the rules which differentiate the two type systems.
  \begin{itemize}
  \item If \typinf{\Gam}{P:w} then \typj{\Gam}{P:w}. This case is straightforward, as $\Gam(a) = \chan k S$ implies \typj{\Gam}{a:\ichan k S}.
  \item If \typj{\Gam}{P:w} then \typinf{\Gam}{P:w}. We have to show that not using the input capability in \Lpi does not change the typability of a process. $i$ is more expressive than \# because it can use subtyping on types and levels. Since we work in \Lpi we cannot have types such as $\ochan {} {\ichan {} T}$ or $\ichan {} {\ichan {} T}$. Moreover using only outputs we cannot impose constraints on the levels, therefore subtyping on $i$ is not used. 
  \end{itemize}
\end{proof}

\fi
As mentioned above,
we suppose w.l.o.g.\ that we
have a term $P$ in which all bound names are pairwise distinct, and
distinct from all free names. We define the following
sets of names:
\begin{itemize}
\item \names{P} stands for the set of all names, free and bound, of $P$;
\item \bnames{P} is the set of names that appear bound (either by
  restriction or by input) in $P$;
\item \rcvnames{P} is the set of names that are bound by an input
  prefix in $P$ ($x\in\rcvnames P$ iff $P$ has a subterm of the form
  $a(x).Q$ or $!a(x).Q$ for some $a, Q$);
\item \resnames{P} stands for the set of names that are restricted in $P$
  ($a\in\resnames P$ iff $P$ has a subterm of the form $(\new a)\,Q$
  for some $Q$).
\end{itemize}
We have $\bnames P = \rcvnames P\uplus \resnames P$ (where $\uplus$
stands for disjoint union), and $\names P = \bnames P \uplus \fnames
P$. Moreover, for any $x\in\rcvnames P$, there exists a unique
$a\in\fnames P\cup \resnames P$ such that $P$ contains the prefix
$a(x)$ or the prefix $!a(x)$: we write in this case $a = \father{x}$
($a\in\fnames P\cup\resnames P$, because we are in \Lpi).

We 
build a graph as follows:
\begin{itemize}
\item  For every name $n\in\fnames P \cup \resnames P$, create a node
  labelled by $n$, and create a node labelled by \son{n}. Intuitively,
  if $n$ has type $\chan k S$ of \ochan k S, \son n{} has type $S$.
In case type inference for simple types returns a type of the form
$\alpha$, where $\alpha$ is a type variable, for $n$, we just create the
node $n$.
\iflong
\daniel{the outcome of the simple types inference procedure should be
  described better}
\fi
\item For every $x\in\rcvnames P$, let $a=\father{x}$, add $x$ as a label to
  \son{a}.
\end{itemize}

\begin{expl}\label{expl:inference:one}
  We associate to the process $P= a(x).(\new b)\,\outm x b~\,|\,~
  !a(y).(\outm c y~|~ d(z).\outm y z)$ the following set of 8 nodes with
  their labels:
  $\set{a}, \set{\son a, x, y}, \set{b}, \set{\son b}, \set{c},
  \set{\son c, y}, \set{d}, \set{\son d, z}$.
\end{expl}

\iflong
\medskip
\fi

The next step is to insert edges in our graph, to represent the
constraints between levels. 
\iflong
This is motivated by the following
results:
\begin{lem}\label{lem:inf:out}
  If \typinf{\Gam}{\outm a v}, then either $v=\unit$, or there exist
  $k, k', n, S, S'$ such that  $\Gam(a)=\chan n\ochan k S$,
  $\Gam(v)=\ochan{k'}S$,  $S\leq S'$ and $k'\leq k$.  
\end{lem}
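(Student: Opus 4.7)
\begin{proofsketch}
My plan is to argue by inversion on the derivation of $\typinf{\Gam}{\outm a v}$. Since the only rule of $\entailinf$ whose conclusion has this shape is the output rule, I would immediately read off from its premises that $\Gam(a)$ is either $\chan n S_0$ or $\ochan n S_0$ for some level $n$ and type $S_0$, together with $\typj{\Gam}{v:S_0}$. Working inside \Lpi{} will then force $S_0$ to be an $S$-type, so either $S_0 = \Unit$ or $S_0 = \ochan k S$ for some $k, S$.

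I would then split on the shape of $S_0$. If $S_0 = \Unit$, the subtyping rules never introduce \Unit{} on either side (they concern channel types only), so the only derivation of $\typj{\Gam}{v:\Unit}$ is the axiom for \unit{}; this gives $v = \unit$. If $S_0 = \ochan k S$ and $v = \unit$, we are in the same alternative.

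The substantive case is $S_0 = \ochan k S$ with $v$ a name. Here I would peel off the subsumption steps in the derivation of $\typj{\Gam}{v:\ochan k S}$ to obtain $\Gam(v)\leq\ochan k S$, and then invoke a short subtyping inversion lemma, proved by induction on the subtyping derivation: any $T$ with $T\leq\ochan k S$ is of the form $\ochan{k'}{S'}$ or $\chan{k'}{S'}$ with $k'\leq k$ and $S\leq S'$ (contravariance of the payload and covariance of the level in rule \textsc{Subt-OO}, together with the coercion \textsc{Subt-\#O}). Applied with $T = \Gam(v)$, this yields the required $k'$ and $S'$. I read the statement's \ochan{k'}S as a minor typo for $\ochan{k'}{S'}$, since the contravariance of \ochan{} in its payload prevents any stronger conclusion.

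The only mildly delicate step is the subtyping inversion across a chain of transitivity steps, but since \textsc{Subt-II}, \textsc{Subt-OO}, \textsc{Subt-\#I}, \textsc{Subt-\#O} all preserve the outer capability and refine levels in a predictable direction, a straightforward induction composing such a chain suffices, and I do not expect any hidden obstacle.
\end{proofsketch}
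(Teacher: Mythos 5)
Your argument is correct and takes essentially the same route as the paper, whose proof is the one-line remark that the lemma is a ``consequence of the typing rule for output, and of the definition of subtyping'': you have simply filled in those two steps explicitly (inversion on the $\entailinf$ output rule, then inversion of $\Gam(v)\leq\ochan k S$ through the reflexive–transitive closure of \textsc{Subt-OO} and \textsc{Subt-\#O}). Your reading of the statement's $\ochan{k'}S$ as $\ochan{k'}{S'}$ is the intended one, so nothing further is needed.
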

\begin{proof}
  Consequence of the typing rule for output, and of the definition of
  subtyping. 
\end{proof}
\begin{lem}\label{lem:inf:in}
  Suppose we have a typing derivation for \typinf{\Gam}{!a(x).Q:0}.
  Then for any output of the form \outm n m{} that occurs in $Q$
  without occurring under a replication in $Q$, if $\Gam(a) = \chan k
  S$ and $n$ has type $\ochan {k'} S'$ or $\chan {k'} S'$ for some
  $k', S'$ in the typing derivation, we have $k'<k$.
\end{lem}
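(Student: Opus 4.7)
The plan is to reduce the statement to a straightforward induction on the typing derivation of $Q$. First, I would apply the typing rule for replicated input in the \entailinf{} system: from \typinf{\Gam}{!a(x).Q:0} we must have $\Gam(a) = \chan k S$ and a subderivation $\deriv_Q$ of \typinf{\Gam, x:S}{Q:w} with $k > w$. Thus it suffices to prove the following auxiliary statement: whenever $\deriv$ is a derivation of \typinf{\Delta}{Q:w} and \outm n m{} is an output occurring in $Q$ not under a replication, if $n$ is used in $\deriv$ with type \ochan{k'}{S'} or \chan{k'}{S'}, then $k' \leq w$. Combined with $w < k$ this yields the desired $k' < k$.

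I would then prove the auxiliary statement by induction on the structure of the typing derivation $\deriv$. The cases to consider are those of the typing rules for \entailinf{}. For the output case $Q = \outm n m$, the output rule immediately yields $w = k'$, so the inequality is actually an equality. For the parallel composition $Q = Q_1 \mid Q_2$, the weight is $\max(w_1, w_2)$, and any non-replicated output of $Q$ is a non-replicated output of one $Q_i$, so the induction hypothesis gives the bound $w_i \leq w$. For the non-replicated input $Q = c(y).Q_1$ and for restriction $Q = (\new c)\,Q_1$, the weight is preserved and the set of outputs not under a replication is unchanged, so the induction hypothesis directly applies. For the replicated input $Q = !c(y).Q_1$, the weight is $0$ and there is nothing to prove, since every output of $Q$ occurs under the replication $!c(y).\_$ itself. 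The case $Q = \nil$ is vacuous.

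The only subtle point is the correct reading of ``occurs without occurring under a replication in $Q$''. Here I would treat this as a purely syntactic condition on the position of the output in the abstract syntax tree of $Q$: it is not in the scope of any $!\_(\_).\_$ prefix whose subject is a subterm of $Q$. With this reading, the replicated-input case is genuinely vacuous, and the induction goes through with no arithmetic other than monotonicity of $\max$.

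The main obstacle, if any, lies in the formulation rather than the induction: one has to make sure that the typing of $n$ in the subderivation of an output \outm n m{} is uniquely determined, so that ``$n$ has type \ochan{k'}{S'} or \chan{k'}{S'}'' makes sense. This is immediate in the \entailinf{} system because the output rule \emph{reads} $\Delta(n)$ directly, forbidding subsumption on the subject. Hence the level $k'$ appearing in the statement coincides with the weight contributed by the output node in the derivation, and the induction closes cleanly.
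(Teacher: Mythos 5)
Your proposal is correct and follows essentially the same route as the paper's proof: extract $k>w$ from the replicated-input rule of $\entailinf$, then show by induction on the (typing derivation of the) continuation $Q$ that the level $k'$ of any output not under a replication is bounded by the weight $w$, the parallel case being $w=\max{w_1,k'}$. Your treatment is in fact more complete than the paper's one-case sketch, and your observation that the subject's type is read directly from the environment (so $k'$ is well defined) is a sensible clarification rather than a divergence.
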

\begin{proof}
  From $!a(x).Q$ and \typj{\Gam}{Q:w'} we have that $k > w'$. We proceed with an induction on $Q$. We only consider the cases in which $Q = E[\outm n m]$. For instance, if $Q = E[\outm n m]~|~ P_1$, \typj{\Gam}{P_1:w_1} then  $w'= \max{w_1, k'}$ and thus $k > k'$.
\end{proof}

We now add the following (directed) edges to our graph:
\fi
\begin{itemize}
\item For every output of the form \outm n m, we insert an edge
  labelled with ``$\geq$'' from \son{n} to $m$.
\item For every subterm of $P$ of the form $!a(x).Q$, and for every
  output of the form \outm n m that occurs in $Q$ without occurring
  under a replication in $Q$, we insert an edge $a\xr{>}n$.
\end{itemize}
\iflong
The first (resp.\ second) kind of edges are justified by
Lemma~\ref{lem:inf:out} (resp. Lemma~\ref{lem:inf:in}).
\fi

\begin{expl}\label{expl:inference:graph}
  The graph associated to process ~$!c(z).\outm b z~|~
  \outm a c~|~ \outm a b$~  
has nodes $$\set{a}, \set{\son a}, \set{b}, \set{\son
    b}, \set{c}, \set{\son c, z}\enspace,$$ and can be depicted as follows:
%
\qquad
$
         \xymatrix{ a & b  & \ar[l]_{<} c\\
                    \son a \ar[ur]^{\geq} \ar[urr]^{\geq} & \son b
                    \ar[r]^{\geq} & 
                     \set{\son c, z}} 
$
\end{expl}

The last phase of the type inference procedure consists in looking for
an  assignment of levels on the graph: this is possible as long as there are no
cycles involving at least one \xr{>} edge in the graph.

\newcommand{\setnodes}{\ensuremath{\mathcal S}}

At the beginning, all nodes of the graph are unlabelled; we shall label
them using natural numbers.
\begin{enumerate}
\item\label{inference:one} We go through all nodes of the graph, and
  collect those that have no outgoing edge leading to an unlabelled
  node in a set \setnodes.

\item\label{inference:two} If \setnodes{} is not empty, we label every
  node $n$ in \setnodes{} as follows: we start by setting $n$'s label
  to $0$. 

  We then examine all outgoing edges of $n$. For every
  $n\xr{\geq}m$, we replace $n$'s label, say $k$, with \max{k,k'},
  where $k'$ is $m$'s label, and similarly for $n\xr{>}m$ edges, with
  \max{k,k'+1}.

%
  We then empty \setnodes, and start again at step~\ref{inference:one}.

\item If $\setnodes=\emptyset$, then either all nodes of the graph are
  labelled, in which case the procedure terminates, or the graph
  contains at least one oriented cycle. If this cycle contains at
  least one \xr{>} edge, the procedure stops and reports
  failure. Otherwise, the cycle involves only \xr{\geq} edges: 
  we compute the level of each node of the cycle along the lines of
  step~\ref{inference:two} (not taking into account nodes of the cycle
  among outgoing edges), and then assign the maximum of these labels
  to all nodes in the cycle.
  We start again at
  step~\ref{inference:one}.

\end{enumerate}




This procedure terminates, since each time we go back to
step~\ref{inference:one}, strictly more nodes are labelled.

\begin{expl}
  On the graph of Example~\ref{expl:inference:graph}, the procedure
  first assigns level $0$ to nodes $a, b$ and \set{\son c, z}. In the
  second iteration, $\setnodes = \set{\son b, c}$; level $0$ is
  assigned to \son b, and $1$ to $c$.  
  Finally, level $1$ is assigned to \son a.
%
%
This yields the typing $b:\ochan 0 \ochan 0 T,
c: \chan 1 \ochan 0 T, a:\ochan 0 \ochan 1 \ochan 0 T$ for the
process of Example~\ref{expl:inference:graph}.
\end{expl}

\iflong
We let the size of a process $P$ be defined as the number of input
prefixes plus the number of output messages in $P$.
\fi

As announced above, for lack of space we have described only the main
steps of our type inference procedure. Establishing that the latter
has the desired properties involves the introduction of an auxiliary
typing judgement (that characterises \typlpi{}{}), and explaining how
types are reconstructed at the end of the procedure. This finally
leads to the following result:
\begin{thm}
  There is a type inference procedure that given a process $P$,
  returns $\Gam, w$ s.t.\   \typlpi{\Gam}{P:w} iff there exists
  $\Gam', w'$ s.t.\   \typlpi{\Gam'}{P:w'}.
\iflong
This procedure is polynomial in the size of $P$.
\fi


\end{thm}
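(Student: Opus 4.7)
\begin{proofsketch}
The plan is to establish soundness and completeness of the three-phase procedure separately, relying on the auxiliary typing judgement \entailinf{} hinted at in the long version.

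First, I would verify that the simple-types phase is correct: by invoking the algorithm of~\cite{DBLP:conf/concur/VasconcelosH93} and then performing the syntactic check that no received name occurs in input position, the procedure returns a skeleton typing iff $P$ admits a simple type assignment in which every channel transmits only output capabilities. This is exactly the underlying structure required by any \typlpi{}{} derivation, since subtyping in \Lpi{} only manipulates levels and never alters the \otypename{}/\#{} skeleton of types nested inside messages. Thus any two \typlpi{}{}-derivations for $P$ agree on the skeleton produced by this phase, and completeness of the rest of the procedure can be carried out relative to this fixed skeleton.

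Second, I would show that the graph faithfully records the level constraints. For each output \outm n m occurring in $P$, the typing rule \textsc{Out-S} together with subtyping forces the level component of \son{n} to be at least the level of $m$, justifying the $\geq$ edge; for each replicated input $!a(x).Q$, rule \textsc{Rep-S} forces \lvl{a} to be strictly greater than the level of any output in $Q$ not under a further replication, justifying the $>$ edge from $a$ to each such output subject. The labelling of \son{a} with every $x\in\rcvnames{P}$ bound by $a$ (or $!a$) records the identification forced by substitution, which in \Lpi{} must be exact on the level field because $x$ is only ever used in output and input positions where it appears with the ambient output capability. Conversely, these are the \emph{only} constraints: an assignment of naturals to nodes satisfying all edge constraints immediately yields a well-typed $\Gamma$ by reading off the levels and reconstructing types along the skeleton.

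Third, I would analyse the labelling loop. Termination is clear since each iteration labels at least one new node (or collapses a cycle of $\geq$ edges into a single level). Correctness of the cycle handling relies on the observation that a pure $\geq$-cycle can always be satisfied by assigning a common level (the max of the contributions from outside the cycle), while a cycle containing at least one $>$ edge is unsatisfiable in $\mathbb{N}$. Therefore: if the procedure reports failure, there is an unsatisfiable chain of constraints, hence no valid $\Gamma'$ exists; and if it terminates successfully, the assignment satisfies every edge constraint and so yields $\Gamma$ with \typlpi{\Gamma}{P:w}, where $w$ is the maximum level attached to a non-replicated output node.

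The main obstacle, I expect, is the precise formulation of the reconstruction step: turning a labelled graph back into a typing environment requires carefully propagating the level of \son{n} downwards through the nested structure of types (since an $\ochan{}{}$ constructor may itself contain an $\ochan{}{}$), and arguing that at every nesting depth the constraints collected locally are exactly those needed by the subtyping and typing rules. This is where the statement of an auxiliary judgement \typinf{}{} that reads types directly from \Gam{} (disallowing subsumption at the subject of a prefix) pays off: proving \typlpi{\Gam}{P:w} iff \typinf{\Gam}{P:w} concentrates all the flexibility of subtyping into the output rule, so that the edge-constraints of the graph are manifestly sound and complete. Once that lemma is in place, soundness and completeness of the inference procedure reduce to the straightforward induction on the structure of $P$ outlined above.
\end{proofsketch}
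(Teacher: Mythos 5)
Your plan follows the same route as the paper: a first phase reducing to simple-type inference \`a la Vasconcelos--Honda plus the syntactic \Lpi{} check, a constraint graph whose two kinds of edges are justified exactly by the output rule with contravariant subtyping on levels and by the premise $k>w$ of \textsc{Rep-S}, and an auxiliary judgement \entailinf{} that reads types from the context so that all subtyping is concentrated at outputs, making the edge constraints sound and complete. This matches the paper's (largely omitted) argument, including the treatment of pure $\geq$-cycles versus cycles containing a $>$ edge, so the proposal is correct and essentially identical in approach.
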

\iflong
\begin{proof}
  The number of nodes in the graph is linear in the size of \names{P},
  which in turn is linear in the size of $P$. 
\end{proof}
\fi

\subsection{Discussion: Inferring i/o-Types}

If we consider type inference for the whole system of
Section~\ref{section:io}, the situation is more complex. We start by
discussing type inference without taking the levels into account.
If a process is typable using simple types
(that is, with only types of the form \chan{}T), one is interested in
providing a more informative typing derivation, where input and output
capabilities are used.

For instance, the process $a(x).\outm x t$ can be typed using
different assignments for $a$: \ichan{}{\ochan{}{T}},
\chan{}{\ochan{}T}, \ichan{}{\chan{}T}, and \chan{}{\chan{}T} --- if
we suppose $t:T$.  Among these, \ichan{}{\ochan{}T} is the most
informative (intuitively, types featuring `less \#' seem preferable
because they are more precise).
Moreover, it is a supertype of all other types, thus acting as a
`candidate' if we were to look for a notion of principal typing.
%
%
Actually, in order to infer i/o-types, one must be able to compute
lubs and glbs of types, using equations like $glb(\ichan{}T,\ichan{}U)
= \ichan{}{\,glb(T,U)}$, $glb(\ichan{}T, \ochan{}U) =
\chan{}{glb(T,U)}$, and $glb(\ochan{}T, \ochan{}U) =
\ochan{}{\,lub(T,U)}$. The contravariance of \ochan{}{} suggests the
introduction of an additional capability, that we shall note
\tchan{}{}, which builds a supertype of input and output capabilities
(more formally, we add the axioms $\ichan{}T \leq \tchan{}T$ and
$\ochan{}T \leq \tchan{}T$).

\cite{DBLP:journals/iandc/IgarashiK00} presents a type inference
algorithm for (an enrichment of) i/o-types, where such a capability
\tchan{}{} is added to the system
of~\cite{DBLP:journals/mscs/PierceS96} (the notations are different,
but we adapt them to our setting for the sake of readability).
The use of \tchan{}{} can be illustrated on the following example
process:
$$
Q_1 \eqdef a(t).b(u).\big(~ !t(z).\outm u z~|~ \outm c t~|~ \outm c u
~\big)
\enspace.
$$
To typecheck $Q_1$, we can see that the input (resp.\ output)
capability on $t$ (resp.\ $u$) needs to be received on $a$ (resp.\
$b$), which suggests the types $a:\ichan{}{\ichan{}T},
b:\ichan{}{\ochan{}T}$. Since $t$ and $u$ are emitted on the same
channel $c$, and because of contravariance of output, we compute
a \emph{supertype} of \ichan{}{T} and \ochan{}{T}, and 
assign type \ochan{}{\tchan{}{T}} to $c$.

Operationally, the meaning of \tchan{}{} is ``no i/o-capability at
all'' (note that this does not prevent from comparing names, which may
be useful to study behavioural
equivalences~\cite{DBLP:journals/mscs/HennessyR04}): 
in the typing we just described, since we only have the input
capability on $t$ and the output capability on $u$, we must renounce
to all capabilities, and $t$ and $u$ are sent without the receiver to
be able to do anything with the name except passing it along.
Observe also that depending on how the context uses $c$, a different
typing can be introduced. For instance, $Q_1$ can be typed by setting
$a:\ichan{}{\chan{}T}, b:\ichan{}{\ochan{}T},
c:\ochan{}{\ochan{}T}$. This typing means that the output capability
on $u$ is received, used, and transmitted on $c$, and both
capabilities on $t$ are received, the input capability being used
locally, while the output capability is transmitted on $c$.

The first typing, which involves \tchan{}{}, is the one that is
computed by the procedure
of~\cite{DBLP:journals/iandc/IgarashiK00}. 
It is ``minimal'', in the terminology
of~\cite{DBLP:journals/iandc/IgarashiK00}.  Depending on the
situations, a typing like the second one (or the symmetrical case,
where the input capability is transmitted on $c$) might be
preferable. 

\medskip

If we take levels into account, and try and typecheck $Q_1$ (which
contains a replicated subterm), the typings mentioned above can be
adapted as follows: we can set $a:\ichan 0 \chan 1 T, b:\ichan 0\ochan
0 T, c:\ochan 0 {\ochan 1 T}$, 
in which case subtyping on levels is used to deduce
$u:\ochan 1T$ in order to typecheck \outm c u.
Symmetrically, we can also set $a: \ichan 0\ichan 1 T, b:\ichan
0\chan 0T, c:\ochan 0 {\ichan 0 T}$, and
typecheck \outm c t{} using subsumption to deduce $t: \ichan 0 T$.

It is not clear to us how levels should be handled in relation with
the \tchan{}{} capability. One could think that since \tchan{}{}
prevents any capability to be used on a name, levels have no use, and
one could simply adopt the subtyping axioms $\ichan k T \leq
\tchan{}T$ and $\ochan k T\leq \tchan{}T$. This would indeed allow us
to typecheck $Q_1$.

Further investigations on a system for i/o-types with \tchan{}{} and
levels is left for future work, as well as the study of inference for
such a system.



\section{Concluding Remarks}
\label{sec:concl}

In this paper, we have demonstrated how Pierce and Sangiorgi's
i/o-types can be exploited to refine the analysis of the
simplest of type systems for termination of processes presented
in~\cite{deng:sangiorgi:termination:IC}. Other, more complex systems
are presented in that work, and it would be interesting to study
whether they would benefit from the enrichment with capabilities and
subtyping.
One could also probably refine the system of Section~\ref{section:io}
by distinguishing between \emph{linear} and \emph{replicated input
  capabilities}, as only the latter must be controlled for
termination (if a name is used in linear input only, its level is
irrelevant). 


The question of type inference for our type systems (differently from
existing proposals, these are presented {\`a} la Curry, which is
better suited for the study of type inference) can be studied further. 
It would be interesting to analyse how the procedure of
Section~\ref{section:inference_lpi} could be ported to programming
languages that obey the discipline of \Lpi{} for communication, like
Erlang or JoCaml.
For the moment, we only have preliminary results for a type inference
procedure for the system of Section~\ref{section:io}, and we would
like to explore this further. 
Type inference for the system of Section~\ref{sec:iofun} is a
challenging question, essentially because making the distinction
between functional and imperative names belongs to the inference
process (contrarily to the setting
of~\cite{demangeon:hirschkoff:sangiorgi:concur10}, where the syntax of
processes contains this information).


\paragraph{Acknowledgements.} Romain Demangeon, as well as anonymous
referees, have provided insightful
comments and suggestions on this work. We also acknowledge support by
ANR projects ANR-08-BLANC-0211-01 "COMPLICE",
ANR-2010-BLANC-0305-02 "PiCoq" and CNRS PEPS "COGIP".

{\small
\bibliographystyle{eptcs}

}

\end{document}

\textbf{Todo}
{\tiny
{discuss related works, in particular other systems with
  subtyping in pi}
/
{check that the typing rules obey the conventions adopted above
  (for instance, ``new c'' or ``new a''?); similarly check that \outm
  a v is used, and not \outm a b, in typing rules}
/ {n'y a-t-il pas un peu trop d'auto-citation?}
%
%
/
verifier qu'il y a assez d'explications/exemples dans les parties
introductives
/
en iflong, je vire le nom des regles: verifier que c'est ok
/
{some references:
  \cite{
    DBLP:conf/birthday/PierceT00,DBLP:journals/iandc/IgarashiK00}
%
/ possibly mettre la def. de la mesure dans une figure, pour gagner de
la place
}
}

\end{document}
\newpage
\appendix
\input{algo-inf}

\end{document}
